\newtheorem{theorem}{Theorem}
\newtheorem{lemma}{Lemma}
\newtheorem{defn}{Definition}
\newtheorem{prop}{Proposition}
\title{Multi-Unit Diffusion Auctions with Intermediaries}
\author{
	Bin Li$^1$\and
	Dong Hao$^2$\and
	Dengji Zhao$^3$\\
	\affiliations
	{\small{$^1$School of Computer Science \& Engineering, Nanjing University of Science and Technology\\
			$^2$School of Computer Science \& Engineering, University of Electronic Science and Technology of China\\
			$^3$School of Information Science and Theconology, ShanghaiTech University\\}}
	\emails
	\{cs.libin@njust,
	haodong@uestc,
	zhaodj@shanghaitech\}.edu.cn
}
\begin{document}

\maketitle
\begin{abstract}

This paper studies multi-unit auctions powered by intermediaries, where each intermediary owns a private set of unit-demand buyers and all intermediaries are networked with each other. Our goal is to incentivize the intermediaries to diffuse the auction information to individuals they can reach, including their private buyers and neighboring intermediaries, so that more potential buyers are able to participate in the auction. To this end, we build a diffusion-based auction framework which incorporates the strategic interaction of intermediaries. It is showed that the classic Vickrey-Clarke-Groves (VCG) mechanism within the framework can achieve the maximum social welfare, but it may decrease the seller's revenue or even lead to a deficit. To overcome the revenue issue, we propose a novel auction, called critical neighborhood auction, which not only maximizes the social welfare, but also improves the seller's revenue comparing to the VCG mechanism with/without intermediaries.
\end{abstract}
\section{Introduction}\label{sec:intro}
Intermediaries are everywhere affecting our lives. When you rent an apartment or buy something, the apps on your phone, like Airbnb or eBay, serve as intermediate platforms, matching you with qualified landlords, your search with products probably purchased. Even when you drive home, Uber may push you a ride-sharing request. These intermediary agencies, which have become an integral part of modern business, take advantage of big data and information technologies to reduce the search costs between suppliers and consumers, promote commodity circulation, and increase trade efficiency. Instead of taking the ownership of the products sold, the intermediaries, like the apps mentioned above, just bring the buyers and sellers together to make a deal and collect commissions from successful transactions. In this work, we consider auction markets powered by intermediaries, where each intermediary owns a private set of buyers and all intermediaries are networked with each other. With the aid of intermediaries, the seller can recruit more buyers to participate in the auction, and then improve her revenue and the allocation efficiency. However, the intermediaries are often partially accessible to the seller. Without using promotions, the seller can only reach a portion of intermediaries existing in the market, which will result in a local resource allocation and a potential revenue loss.

One way to tackle this problem is to advertise the sale to attract more participants. However, the inadequate exposure of the advertisements makes the return unpredictable. The seller may lose the investments if the advertisements do not bring valuable participants. Besides, it is not a trivial matter to carry out an advertisement campaign for small and local business. Instead, building the promotion inside the marketplace is an attractive marketing mode. The goal is to incentivize the early participants to further share the auction information to other individuals such that the auction information can be fully spread in the market.
Traditional marketing strategies have provided solutions for promoting a product or an innovation in social networks. For example, in multi-level marketing or viral marketing \cite{leskovec2007dynamics} the customers can receive appropriate rebates on future purchases or direct monetary rewards \cite{emek2011mechanisms,kleinberg2005query} if their friends have made any purchase on their recommendation. Nevertheless, these methods are not applicable in the intermediary-based auction market for two reasons. Firstly, they do not consider the potential costs in the transactions, which may include the commissions the intermediaries request, the information handling fee or transportation/labor costs for delivering the commodities, etc. Secondly, these methods also ignore the competition between buyers which can be an important part of the deal, especially for scarce resources.

Recently, diffusion auction design has received a lot of attention in the literature of mechanism design. In the seminal work \cite{li2017mechanism}, the authors built an auction framework for a seller selling one commodity in social networks, which combines both the information diffusion and the buyer's competition. They also proposed a novel auction mechanism under the framework, called the information diffusion mechanism, to incentivize the buyers to invite other buyers to the auction. After their work, many efforts have devoted to this thread from different angles. Some extended the basic single-item diffusion auction model to multi-unit setting \cite{Zhao2018Multi,kawasaki2020strategy} and weighted graphs \cite{li2019graph,Li2018CustomerSI}; some studied diffusion auctions constrained with additional properties, like fairness \cite{zhang2019incentivize}, collusion-proofness \cite{jeong2020groupwise} or resale-proofness \cite{LI2022103631}. Inspired by the diffusion auction model, there are papers that introduce diffusion incentives to crowdsourcing \cite{zhang2020collaborative,zhang2020sybil}, procurement \cite{moustafa2021diffusion,liu2021budget}, exchange \cite{kawasaki2021mechanism}, and fog computing \cite{yang2018incentive}. See \cite{zhao2021social,guo2021emerg} for the recent progress.

We contribute to the field of diffusion auction design and study diffusion auctions with intermediaries, taking into consideration the diffusion incentives of intermediaries, the potential transaction costs and the competition between buyers. The most related work is from \cite{Li2018CustomerSI}, where the authors investigated a similar setting for the case of single commodity. We generalize their model to the intermediary-based market with multiple commodities. Our main contribution is a multi-unit diffusion auction, called critical neighborhood auction, which not only maximizes the social welfare, but also optimizes the seller's revenue.


\section{Preliminaries}\label{sec:model}
We first illustrate the components of an auction market with intermediaries, then build a formal auction framework within the market.
\subsection{The Intermediary Model}
Let $s$ denote a seller who is willing to sell a set of $K\ge 1$ identical commodities. Besides the seller, the market consists of a set of agents, denoted by $N$, which are divided into two disjoint categories: a set of intermediaries $I$ and a set of buyers $B$. Each intermediary $i\in I$ owns a private set of buyers, and all intermediaries in the market are networked with each other. Let $r_i\subseteq N\setminus\{i\}$ denote the agents with whom $i$ can communicate in the market. For convenience, we use $r_s$ to denote the agents directly reached by $s$. 
Each buyer $j\in B$ in the market is unit-demand, meaning that she has use for only one commodity and her value, denoted by $v_j$, for consuming one or more commodities is unchanged.
A transaction is defined by an agent path $\{s=a_0, a_1, \cdots , a_m, j=a_{m+1}\}$, where $j$ is a winning buyer, $a_1,\cdots,a_m$ are successive intermediaries with $a_k\in r_{a_{k-1}}$ for $k=1,\cdots, m+1$. Note that there is no need for an intermediary when the winning buyer is the seller's direct neighbor.
Given an intermediary $i\in I$ and a neighbor $k\in r_i$, 
we use $c_{ik}$ to denote the transaction cost between agents $i$ and $k$. 
For example, $c_{ik}$ could be the commissions $i$ requests for processing the transaction with $k$ or be the transportation expenses of delivering the commodities from $i$ to $k$, etc. In this work, we consider separable costs with the form of $c_{ik}=w_{ik}n_{ik}$, where $w_{ik}$ represents the cost per transaction and $n_{ik}$ is the number of transactions involving $(i,k)$. In addition, we make two assumptions on the market model: $1)$ for every $i\in I$ and $j\in B$, $r_i$ and $v_j$ are private information, and $2)$ $w_{ik}$ is fixed and known for every $k\in r_i$.
In the following contents, we will slightly abuse notations and refer to $i$ as an intermediary, $j$ as a buyer, and $k$ as an arbitrary agent.

Figure \ref{example} demonstrates an example of the auction market, where the squares (except for the seller labeled by $s$) represents for the intermediaries and the circles are potential buyers. The number in each circle denotes the buyer's private valuation and the cost per transaction is labeled on each edge, and there is an edge $(i,k)$ if $k\in r_i$. 

\begin{figure}[t]
	\centering
	\includegraphics[width=3.3in]{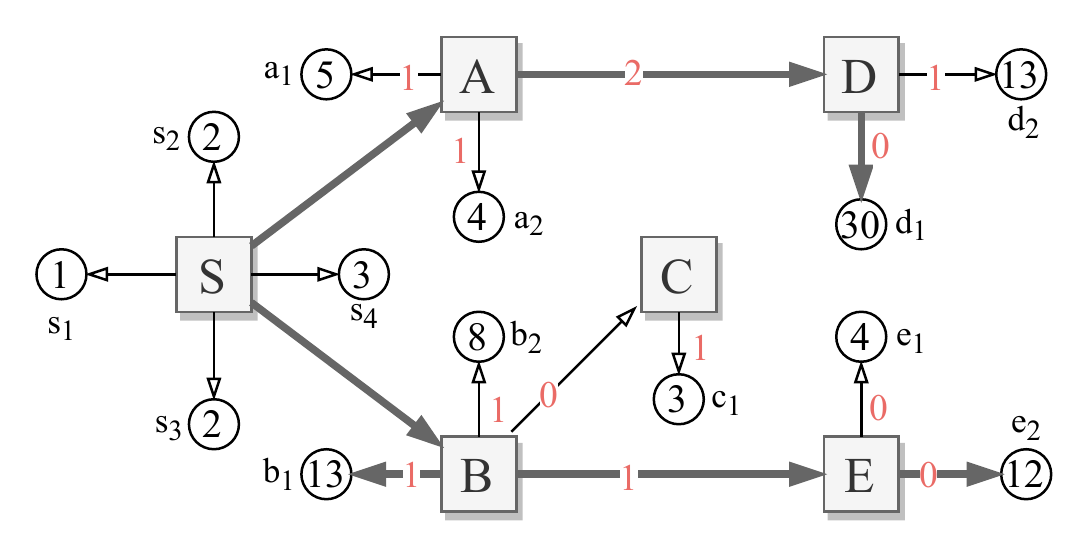}
	\caption{An example of the auction market with intermediaries. The subtree with bold lines represents for the efficient allocation graph (defined in Section \ref{sec:cnm}) with three commodities.}\label{example}
\end{figure}

\subsection{Auction Model with Intermediaries}

We next present the auction model with intermediaries. As usual, let $\theta_k$ be the private type of agent $k$ and $\Theta_k$ denote $k$'s type space. If $k$ is an intermediary, then $\theta_k=r_k$ and $\Theta_k=\mathcal{P}(N)$ where $\mathcal{P}(N)$ is the power set of $N$; otherwise, $\theta_k=v_k$ and $\Theta_k=\mathcal{R}_{+}$ for a buyer $k$. We use $\Theta=\times \Theta_{k\in N}$ to denote the type profile space of all agents. Since $\theta_k$ is private information, agent $k$ can cheat the mechanism to benefit herself via strategic report. Accordingly, let $\theta_k'$ be $k$'s reported type, where $\theta_k'=v_k'\in \mathcal{R}_{+}$ represents buyer $k$'s bid and $\theta_k'=r_k'\subseteq r_k$ is the neighbor set intermediary $k$ declares to have. Since each intermediary is only aware of her neighbors in the market, the misreport space of $r_k'$ is limited to $\mathcal{P}({r_k})$. Let ${\bf \theta'}$ be the reported type profile of all agents, ${\bf \theta'}_{-k}$ be the reported type profile of all agents except $k$, i.e., ${\bf \theta'} = (\theta_k', {\bf \theta'}_{-k})$.

Note that $r_k$ is private information, thus an agent can join in the sale only if they have received the auction information from someone who is already in the auction. In Figure \ref{example}, if agent $B$ does not share the auction information to agent $E$, then both $E$ and her private buyer set $\{e_1, e_2\}$ cannot participate in the sale.

\begin{defn}\label{valid_agent}
	Given a reported type profile ${\bf \theta}'$, we say agent $k$ is valid if there exists a sequence of intermediaries $\{a_1,a_2,\cdots,a_m\}$ with $a_1\in r_s, a_i\in r'_{a_{i-1}},k\in r'_{a_m}$ for $1<i\leq m$.
\end{defn}

That is, $k$ is a valid agent if there is a ``diffusion path" from the seller $s$ to $k$, and $k$ is invalid if such a ``diffusion path" does not exist. In order to implement a sale in practice, the seller should pick out all valid agents. We emphasize that though the set of participants evolves with the spread of the auction information, the scenario can be transformed into a static mechanism design setting: each agent directly submits her report to the seller; after receiving all reports, the seller then executes the mechanism among all valid agents. 

For convenience, let $F({\bf \theta'})$ denote all valid agents for a given $\theta'$. In addition, we use $T$ to denote the space of all possible transactions regarding to $N\cup\{s\}$, and $T(\theta')$ to denote the space of transactions construed by all valid agents $F({\bf \theta'})$. We now formally define the auction mechanisms with intermediaries.
\begin{defn}\label{int_auction}
	An \emph{auction mechanism} $M=(\pi,x)$ with intermediaries consists of an allocation policy $\pi: \Theta\rightarrow \mathcal{P}(T)$ and a payment policy $x=\{x_k:\Theta\rightarrow \mathcal{R}\}_{k\in N}$, where for all reported type profile $\theta'$, $\pi$ and $x$ satisfy the following constraints:
	\begin{itemize}
		\item [$1)$] $\pi(\theta')$ and $\{x_k(\theta')\}_{k\in N}$ are independent of the reports of invalid agents $N\setminus F(\theta')$;
		\item [$2)$] $\pi(\theta')\subseteq T(\theta')$ and $|\pi(\theta')|\le K$;
		\item [$3)$] $x_k(\theta')=0$ for all invalid agents.
	\end{itemize}
\end{defn}
Given all agents' reports $\theta'$, $\pi(\theta')$ outputs a set of transactions and $x_k(\theta')$ is the amount each agent $k$ pays.
To ensure the correctness of the model, we put three constraints on the auction policies. 
The first constraint requires that the mechanism should only sell the commodities among all valid agents. The second constraint indicates that the commodities cannot be oversold, and the last one states that the mechanism cannot charge an invalid agent.

Given a reported type profile $\theta'$ and an auction $(\pi,x)$, the transaction costs $C(\theta')$ for $\pi(\theta')$ is $\sum_{i\in I, k\in r_i}w_{ik} n_{ik}$, which can also be expressed as $\sum_{t\in \pi(\theta')}\sum_{(i,i+1)\in t}w_{ii+1}$. 
Here $(i,i+1)$ corresponds to two adjacent trading agents in transaction $t$. To conclude the transactions in $\pi(\theta')$, the seller should cover the transaction costs $C(\theta')$. The utility of agent $k$ in $(\pi,x)$ is defined as:
\begin{equation}
	u_k\big(\theta_k, {\bf \theta}^\prime, (\pi,x)\big)=
	\begin{cases}
		-x_k(\theta') &\text{$k\in I$,}\\
		z_k({\bf \theta}^\prime)\theta_k  - x_k({\bf \theta}^\prime) &\text{$k\in B$,}
	\end{cases}
\end{equation}
where $z_j(\theta')$ is an indicator with $z_j(\theta')=1$ for a winning buyer and $z_j(\theta')=0$ otherwise. For the seller, her utility, or revenue, is the sum of all buyers' payments minus the transaction costs, denoted by $R((\pi, x), {\bf \theta}^\prime) = \sum_{k\in N} x_k({\bf \theta}^\prime)-C(\theta')$.

Given an allocation $\pi(\theta')$, the social welfare is defined as the total utilities of all agents, which is identical to $\sum_{j\in B} z_j({\bf \theta}')v_j-C(\theta')$. We say an allocation policy is efficient if it maximizes the social welfare for all ${\bf \theta'}$.

\begin{defn}\label{efficient_allocation}
	An allocation policy $\pi^*$ is \emph{efficient} if for all ${\bf \theta'}$,
	\begin{equation}
		\pi^* \in {\arg\max}_{\pi^\prime \in \Pi} W^{\pi'}({\bf \theta'})
	\end{equation}
	where $\Pi$ is the set of all allocation polices defined in Definition \ref{int_auction} and $W^{\pi'}({\bf \theta'})$ is the social welfare achieved in $\pi'(\theta')$.
\end{defn}

By the revelation principle, we can restrict our attention on incentive compatible mechanisms w.l.o.g. 

\begin{defn}
	A mechanism $(\pi, x)$ is \emph{incentive compatible} (IC) if
	$u_k\big(\theta_k, (\theta_k,{\bf \theta}_{-k}^\prime), (\pi,x)\big) \geq u_k\big(\theta_k, (\theta_k^\prime, {\bf \theta}_{-k}^{\prime}), (\pi,x)\big)$ for all $k\in N$, all $\theta_k^\prime$, and all ${\bf \theta}_{-k}^\prime$.
\end{defn}
That is, submitting true type $\theta_k$ to the seller is a dominant strategy for all agent $k\in N$. To ensure that all agents are willing to stay in the auction, the mechanism should also be individually rational.

\begin{defn}
	A mechanism $(\pi,x)$ is \emph{individually rational} (IR) if $u_k\big(\theta_k, (\theta_k,{\bf \theta}_{-k}^\prime), (\pi,x)\big) \geq 0$ for all $k\in N$, and all ${\bf \theta}_{-k}^\prime$.
\end{defn}
This property, aka the participation constraint, guarantees that each agent will not receive a negative utility when revealing her type truthfully. In addition, a major reason for paying the intermediaries is to improve the auction outcomes, especially to increase the seller's revenue. Hence, an auction with an external subsidy is unsatisfying.


\begin{defn}
	A mechanism $(\pi, x)$ is \emph{weakly budget balanced} if $R({{M}}, {\bf \theta}') \geq 0$ for all ${\bf \theta}'$.
\end{defn}

The feature for auctions with intermediaries is that we must consider the connections between agents' reports. If an intermediary changes her neighbor set report, it may change the set of participants and further affect her own utility. This kind of ability is closely correlated with her position in the market, which makes the problem nontrivial. In the following contents, we focus on designing auction mechanisms that satisfy IR, IC and other desirable properties. To make the manuscript easy to follow, we present our results for the setting where all intermediaries are networked in a tree, and the results hold for arbitrary connections between intermediaries.

\section{Incentive Compatibility}\label{sec:ic}
Single-parameter domains \cite{archer2001truthful,auletta2004deterministic} refer to the settings where each agent's preference over allocations is characterized by a single parameter. Many classic settings, like single-item auctions or multi-unit auctions with unit-demand buyers, all fall into this domain. In our model, the intermediaries seek for profit maximization via information sharing and have no interests on the commodities. Moreover, the buyers are unit-demand and their preferences on the allocations can be characterized by their valuations. Therefore, our setting also falls into the category of single-parameter domains. It is well known that a normalized mechanism (i.e., losers pay zero) for single-parameter domains is IC and IR if and only if its allocation policy is value-monotonic, and the winner pays the minimum winning bid \cite{Myerson1981OptimalAD,archer2001truthful}. 


\begin{defn}\label{value-mono}
	An allocation policy $\pi$ is value-monotonic if for all ${\bf \theta'}$ and all valid buyer $j\in F(\theta')$ with $z_j(v_j', \theta'_{-j})=1$, we have $z_j(v_j'',\theta'_{-j})=1$ whenever $v_j''> v_j'$.
\end{defn}
In other words, an allocation policy is value-monotonic if a winning buyer cannot lose by posting a higher bid. 
Given a value-monotonic allocation policy, there exists a bid for each buyer, above/below which the buyer wins/loses following others' bids. This bid is called the buyer's critical bid which stands for the minimum bid for winning an item. 
\begin{defn}
	Given a value-monotonic $\pi$ and others' reports $ \theta'_{-j}$, the critical bid of a valid buyer $j$ is defined as $v^*_j(\theta')=\arg \min_{v_j'\in \mathcal{R}_{+}}\{z_j(v_j',\theta'_{-j})=1\}$.
\end{defn}

If the seller knows all potential buyers in advance, i.e., $r_s=B$, there is no need for an intermediary and the scenario reduces to a typical multi-unit auction setting. General theories for this setting have been well developed, see \cite{krishna2009auction} for an overview. For general cases, the diffusion strategies of intermediaries need be considered. To facilitate the spread of the auction information, the intermediaries should be rewarded for their diffusion. 

Recall that there exist two kinds of incentives in our model. That is, the buyers can misrepresent their valuations on the commodities and the intermediaries can selectively diffuse the sale information to her neighbors. Essentially, the latter incentive is quite different with the former. When characterizing incentive compatibility, technically they can be treated separately .


\begin{theorem}\label{ic_ir}
	An auction mechanism $(\pi,x)$ is incentive-compatible and individually rational if and only if 
	\begin{itemize}
		\item [1.] $\pi$ is value-monotonic and each winner pays her critical bid and the losing buyers pay zero;
		\item [2.] $x_k(r_k',\theta'_{-k})\le x_k(r_k'',\theta'_{-k})\le 0$ for all $k\in I$, all $\theta'=(r_k', \theta'_{-k})$ and all $r_i''\subseteq r_i'$.
	\end{itemize}
\end{theorem}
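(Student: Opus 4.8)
The plan is to prove both directions of the equivalence by decomposing the two incentive dimensions—valuation misreports by buyers and neighborhood misreports by intermediaries—and handling them through separate characterizations, exactly as the paragraph preceding the statement anticipates. For the buyer side I would invoke the classical single-parameter characterization (Myerson / Archer--Tardos) cited in the excerpt: since each buyer's utility is $z_j(\theta')v_j - x_j(\theta')$ with a single real parameter $v_j$, the mechanism is IC and IR along the buyer dimension if and only if the allocation is value-monotonic and each winner pays her critical bid while losers pay zero. This is condition 1, so the main work reduces to establishing that the intermediary-side incentives are captured precisely by condition 2.

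For the intermediary side, the key observation is that an intermediary $i$ has no valuation and her utility is simply $-x_i(\theta')$, so she has no reason to distort a ``value'' and her only lever is the reported neighbor set $r_i' \subseteq r_i$. First I would fix $\theta'_{-i}$ and argue IR for an intermediary means $-x_i(r_i', \theta'_{-i}) \ge 0$, i.e. $x_i \le 0$ for every truthful-report restriction—this yields the rightmost inequality $x_k(r_k'', \theta'_{-k}) \le 0$. Next, for IC I would write the no-deviation condition: truthfully reporting $r_i$ must weakly dominate reporting any $r_i'' \subseteq r_i$, which translates to $-x_i(r_i, \theta'_{-i}) \ge -x_i(r_i'', \theta'_{-i})$, equivalently $x_i(r_i, \theta'_{-i}) \le x_i(r_i'', \theta'_{-i})$. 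The subtlety is that the misreport space is restricted to subsets of the true neighbor set (an intermediary cannot invent neighbors), so the relevant comparison is only over $r_i'' \subseteq r_i'$ where $r_i'$ is the true type in a given profile; iterating this monotone-in-inclusion relation over the chain of subsets gives the full condition $x_k(r_k', \theta'_{-k}) \le x_k(r_k'', \theta'_{-k}) \le 0$ for all $r_i'' \subseteq r_i'$.

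To assemble the two directions cleanly I would treat sufficiency and necessity symmetrically. For necessity, assume the mechanism is IC and IR; restricting deviations to the buyer coordinate and applying the cited single-parameter result forces condition 1, while restricting deviations to the intermediary coordinate forces condition 2 by the argument above. For sufficiency, assume conditions 1 and 2; I would verify that no agent can profit from a unilateral deviation by splitting into the two cases—a buyer's deviation is covered by the critical-bid/value-monotone payment structure, and an intermediary's deviation $r_i'' \subseteq r_i$ cannot increase her utility because $-x_i$ is nonincreasing as the reported neighbor set shrinks (payments are monotone in inclusion), and IR holds because $x_i \le 0$. The one structural point that must be stated carefully is that the two deviation types are genuinely separable: changing $v_j'$ affects only the buyer dimension of $\theta'$ and changing $r_i'$ affects only the validity/structure dimension, so IC can be checked coordinate by coordinate.

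The step I expect to be the main obstacle is justifying the separability and the correct interpretation of the inclusion-monotonicity for intermediaries. When an intermediary hides a neighbor, she does not merely change a payment in a fixed environment—she can remove an entire subtree of valid agents $F(\theta')$, altering the whole downstream transaction space $T(\theta')$ and hence potentially her own payment in a non-obvious way. I must argue that condition 2, phrased as monotonicity over all $r_i'' \subseteq r_i'$, exactly encodes the requirement that no such subtree-pruning deviation is ever profitable, including multi-step or non-adjacent subset deviations; the cleanest route is to note that any $r_i'' \subseteq r_i$ can be compared to $r_i$ directly through a single application of the inequality rather than needing a chain, so the stated two-sided inequality is both necessary (take $r_i'' = r_i$ as the upper endpoint) and sufficient (the truthful report is the maximal element of the misreport lattice and thus minimizes $x_i$). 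Getting this lattice/endpoint argument stated precisely, while confirming that buyer-side and intermediary-side conditions do not interfere, is where the care is needed.
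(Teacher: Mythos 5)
Your proposal is correct and takes essentially the same route as the paper's own (very terse) proof: the buyer side is handled by citing the classical single-parameter characterization of Myerson/Archer--Tardos, and the intermediary side follows directly from the definitions of IC and IR, using that every subset $r_k''\subseteq r_k'$ is itself a possible true type so the two-sided inequality in condition 2 is both necessary and sufficient. Your elaboration of the inclusion-monotonicity (in particular, that a single application of the inequality with the truthful report as the maximal element of the misreport lattice suffices, despite hidden neighbors pruning whole subtrees of valid agents) is a faithful expansion of what the paper compresses into one sentence.
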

\footnotetext{We have moved all the proofs to the appendices, which can be found in the supplementary material.}
Theorem \ref{ic_ir} suggests that the seller can design the incentives for the intermediaries and the buyers independently to obtain truthfulness. This works if truthful reporting is the only desiderata for the seller. However, as we mentioned earlier the revenue is also a crucial consideration. If the mechanism does not improve the revenue, the seller is disincentivized to use it. Hence, we cannot treat the intermediaries and the buyers as unrelated individuals when designing auctions with performance advantages. 

\subsection{Non-Degenerated Mechanisms}
Given any IC and IR mechanism $M=(\pi, x)$, it is easy to construct another mechanism $M'=(\pi,x')$ where $x_j'=x_j$ for all buyers $j$ and $x_i'$ is a constant function for all intermediaries $i$. That is, $M'$ has the same allocation policy with $M$ but awards a constant money to each intermediary, e.g., setting $x_i'(\theta')=1$ for all intermediaries $i$ and all $\theta'$. We can check that $M'$ also satisfies the conditions listed in Theorem \ref{ic_ir}, i.e., $M'$ is IC and IR. However, $M'$ is not applicable in practice because it implicitly assumes that when an intermediary comes to a situation where she is indifference with sharing the auction information or not, she always chooses the former. This kind of mechanism gives no incentive to the intermediaries to do the sharing. For this reason, we also require the proposed mechanisms to satisfy the non-degenerated property, which is defined below.

\begin{defn}\label{non_trivial}
	$M$ is non-degenerated if for all intermediaries $i\in I$, $u_i(r_i,\theta, M)>u_i(r_i',\theta', M)\ge 0$ for some type profile $\theta=(r_i,\theta_{-i})$ and $\theta'=(r_i',\theta_{-i})$ with $r_i'\subset r_i$.
\end{defn}
For any non-degenerated auction mechanism, we actually create a situation in which an intermediary is uncertain about whether or not she gets rewarded before the auction, but she is sure that there is some type profile where sharing the auction information brings her a positive utility improvement. This property drives the intermediaries to propagate the auction information to others when facing uncertainties. In this work, we only consider non-degenerated auction mechanisms.

\section{Case Study: The VCG Mechanism}\label{sec:vcg}
As an introduction example, we apply the classic Vickrey-Clark-Groves (VCG) mechanism \cite{vickrey1961counterspeculation,clarke1971multipart,groves1973incentives} to our setting. In the VCG mechanism, the commodities are allocated in a way that maximizes the social welfare, i.e., it adopts the efficient allocation policy, and each agent is charged the social welfare decrease of others due to her participation. For convenience, we use $W^*(\theta'_{-k})$ to denote the social welfare without agent $k$'s participation and $W_{-k}^*(\theta')=W^*(\theta')-z_k(\theta')v_k'$ to denote the social welfare without considering agent $k$'s contribution ($v_k'$ is assumed as zero for all intermediaries). The formal definition of the VCG mechanism is given in the following.


\begin{framed}
	\noindent\textbf{The Vickrey-Clark-Groves Mechanism}\\
	\rule{\textwidth}{0.5pt}
	\begin{itemize}
		\item  \textbf{Allocation policy}: Given a reported type profile $\theta'$, apply the efficient allocation $\pi^*(\theta')$ (with random tie-breaking).
		\item  \textbf{Payment policy}: For each agent $k\in N$, 
		\begin{equation}
		x_k(\theta')=W^*(\theta'_{-k})-W_{-k}^*(\theta').
		\end{equation}
	\end{itemize}
\end{framed}


Recall that an agent could join in the sale only if she has received the sale information from someone who is already in the sale. As a result, each agent's participation is entangled with other agents' participation and diffusion. In our scenario, the expression $W^*(\theta'_{-k})$ in the payment policy actually represents for the maximum social welfare generated by all valid agents $F(\theta_{-k}')$.

\begin{prop}\label{vcg_outcome}
	The VCG mechanism is incentive compatible, individually rational, efficient and non-degenerated.
\end{prop}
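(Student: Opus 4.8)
The plan is to verify each of the four claimed properties in turn, leaning on the structural characterization already established in Theorem \ref{ic_ir} wherever possible. The VCG payment $x_k(\theta') = W^*(\theta'_{-k}) - W^*_{-k}(\theta')$ is the standard Clarke pivot rule, so the cleanest route to incentive compatibility is to show its two ingredients match the two conditions of Theorem \ref{ic_ir}. For a buyer $j$, I would argue that the efficient allocation $\pi^*$ is value-monotonic (raising $v_j'$ can only weakly increase the welfare contribution of including $j$, so a winner stays a winner), and that the VCG charge $W^*(\theta'_{-j}) - W^*_{-j}(\theta')$ equals exactly the critical bid $v^*_j$—the minimum value at which including $j$ in the efficient allocation remains optimal. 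This is the familiar fact that the Clarke payment coincides with the threshold bid in single-parameter domains, and the paper has set up precisely that framework.

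For the intermediary side, I would verify condition 2 of Theorem \ref{ic_ir}: that $x_i(r_i',\theta'_{-i}) \le x_i(r_i'',\theta'_{-i}) \le 0$ whenever $r_i'' \subseteq r_i'$. Since $v_i' = 0$ for intermediaries, the payment reduces to $x_i(\theta') = W^*(\theta'_{-i}) - W^*(\theta')$. Non-positivity follows because removing $i$ can only shrink the set of valid agents $F(\theta')$ and hence the achievable welfare, giving $W^*(\theta'_{-i}) \le W^*(\theta')$. The monotonicity in the neighbor report is the key relation: declaring a larger neighbor set $r_i'$ weakly enlarges $F(r_i',\theta'_{-i})$ relative to $F(r_i'',\theta'_{-i})$, so $W^*(r_i',\theta'_{-i}) \ge W^*(r_i'',\theta'_{-i})$, while the term $W^*(\theta'_{-i})$ is independent of $i$'s own report; subtracting then yields the chain of inequalities. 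Efficiency is immediate since the allocation policy is $\pi^*$ by definition.

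The part I expect to demand the most care is \emph{non-degeneracy} (Definition \ref{non_trivial}), because it is a genuinely existential, model-specific claim rather than a reduction to earlier theory. I must exhibit an intermediary $i$ and two type profiles $\theta = (r_i, \theta_{-i})$ and $\theta' = (r_i', \theta_{-i})$ with $r_i' \subset r_i$ for which $u_i(r_i,\theta,M) > u_i(r_i',\theta',M) \ge 0$. The natural construction is to take $r_i$ to contain a high-value buyer (or a path to one) that, once withheld by reporting $r_i' \subset r_i$, is no longer valid and drops out of the efficient allocation. I would then show that this withheld buyer strictly increases the welfare gap $W^*(\theta_{-i}) - W^*(\theta)$ that $i$ captures, so that her utility $-x_i = W^*(\theta) - W^*(\theta_{-i})$ is strictly larger under full disclosure. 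Concretely, the example in Figure \ref{example} already furnishes such a configuration, and I would instantiate it to witness the strict inequality; the $\ge 0$ half is just the IR guarantee already proved.

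The main obstacle, therefore, is not the three ``for-all'' properties—those fall out of standard Clarke-pivot arguments once the single-parameter structure and the monotonicity of $F$ under neighbor-set enlargement are in hand—but rather pinning down the non-degeneracy witness cleanly: I must ensure the chosen profiles satisfy $r_i' \subset r_i$ strictly, that the withheld agent is pivotal to the welfare comparison, and that the resulting utility difference is verifiably strict. I would keep the verification concrete by computing $u_i$ on a small instance so that the strict gap is transparent rather than asserted.
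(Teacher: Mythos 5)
Your proposal is correct and follows essentially the same route as the paper's proof: both reduce incentive compatibility and individual rationality to the two conditions of Theorem~\ref{ic_ir}, using value-monotonicity of $\pi^*$ and the identification of the Clarke payment with the critical bid for buyers, and the monotonicity of $W^*(\theta')$ in $r_i'$ (with $W^*(\theta'_{-i})$ report-independent) for intermediaries, while efficiency holds by definition. Your non-degeneracy argument --- that an intermediary pivotal to reaching a high-value buyer gets a strictly positive utility gain from full disclosure, witnessed concretely as in Figure~\ref{example} --- is exactly the paper's observation that an intermediary's utility strictly improves whenever her diffusion increases $W^*(\theta')$, which occurs whenever she lies in the efficient allocation.
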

With the assumption of separable costs, the efficient allocation $\pi^*(\theta')$ can be identified efficiently as follows. Let $T_j(\theta')$ denote the transaction between $s$ and a valid buyer $j$, and accordingly let $W_j(\theta')=v_j'-\sum_{(i,i+1)\in T_j(\theta')w_{ii+1}}$ be the social welfare created by $T_j(\theta')$. Given any reported type profile $\theta'$, we can compute $T_j(\theta')$ and $W_j(\theta')$ for all valid buyers, and $\pi^*(\theta')$ precisely corresponds to the set of transactions with top $K$ non-negative $W_j(\theta')$. 

To give an intuitive description, we apply the VCG mechanism in Figure \ref{example}. Suppose the seller is endowed with three commodities, and all agents act truthfully. The transactions with top three non-negative highest social welfare are $\{s,B,b_1\}$, $\{s,B,E,e_2\}$ and $\{s,A,D,d_1\}$. Accordingly, the social welfare achieved in the efficient allocation equals the summation of the social welfare in each transaction, which is $W^*(\theta)=(13-1)+(12-1-0)+(30-2-0)=51$, and the total transaction costs are $4$. Now consider the VCG payment of intermediary $B$. When $B$ does not participate in the auction (note that all agents rooted by $B$ cannot join in the sale as well), the efficient allocation includes transactions $\{s,A,D,d_1\}$, $\{s,A,D,d_2\}$ and $\{s,A,a_1\}$, and the social welfare is $W^*(\theta_{-B})=28+10+4=42$. According to the payment policy, $B$'s payment is $W^*(\theta_{-B})-W_{-B}^*(\theta)=42-(51-0)=-9$. In other words, the seller should reward $9$ to $B$ in the VCG mechanism. Similarly, we can compute the rewards for other intermediaries, which is $21$ for each $A$ and $D$, $0$ for $C$, and $1$ for $E$.
The winning buyers are $b_1$, $d_1$ and $e_2$ who pay $11, 12, 11$ to the seller respectively, and other buyers pay zero. Eventually, the seller's revenue is $11+12+11-9-2\cdot 21-0-1-4=-22$, where the last term $4$ represents for the transaction costs.

Although the VCG mechanism achieves the maximum social welfare, it is not weakly budget balanced, as shown in the above example. The underlying reason is the VCG mechanism treats the intermediaries the same as the buyers. As a result, it not only rewards the intermediaries to share the auction information, but also rewards them to bid truthfully, which is not necessary as they do not bid. To tackle the revenue issue, a natural approach is reducing the amounts paid to the intermediaries. Following this line, \cite{Li2018CustomerSI} proposed the customer sharing mechanism (CSM) for the case of single item. The mechanism firstly identifies the threshold neighborhoods for each intermediary, then uses them to define the payment. The threshold neighborhood is well-defined for the case of single item, but is not able to generalize to multiple items. In addition, there are attempts \cite{Zhao2018Multi,kawasaki2020strategy} studying multi-unit diffusion auctions with competitive revenue. However, the developed techniques are tailored for unweighted social network settings and are not applicable to our setting.
In the next section, we develop new techniques and propose a novel non-degenerated auction for selling multiple items regarding to our scenario, which is weakly budget balanced and outperforms the VCG mechanism on the seller's revenue.


\section{Critical Neighborhood Auction}\label{sec:cnm}
In this section, we propose the critical neighborhood auction (short for CNA) to solve the revenue issue of the VCG mechanism. The CNA not only incentivizes the intermediaries to share the auction information to all their neighbors, it also brings more revenue to the seller. In particular, we can prove that the revenue achieved in CNA is always no less than the revenue given by the VCG mechanism with/without using intermediaries. We first introduce the concepts of allocation graph and critical neighborhood, then give the formal definition of the mechanism. After that, we analyze the properties of CNA, and prove that it is IC, IR, non-degenerated, efficient and weakly budget balanced.

\subsection{Allocation Graph and Critical Neighborhood}
Allocation graph is a succinct representation of a given allocation. It is built by all transactions in the allocation, illustrating how the commodities flow from the seller to the intermediaries, then to the winners.

\begin{defn}[Allocation Graph]\label{all_graph}
	Given an allocation $\pi(\theta')$, the allocation graph $AG(\theta')$ is defined as the union of the transactions in $\pi(\theta')$, where an edge $(i,k)\in AG(\theta')$ if $(i,k)$ is included in $\pi(\theta')$.
\end{defn}

Once an allocation is identified, the construction of the corresponding allocation graph is straightforward. In particular, we use $AG^*(\theta')$ to denote the efficient allocation graph which is constructed by all transactions in the efficient allocation $\pi^*(\theta')$. For example, the subtree with bold arrows in Figure \ref{example} represents for the efficient allocation graph for three commodities. It should be noted that if an intermediary had shared the auction to all her neighbors, and she is not a node of $AG^*(\theta')$, then she is still an outsider for arbitrary sharing strategy.
Based on the concept of allocation graph, we now define the critical neighborhood for each intermediary, which is a key for the proposed mechanism.
\begin{defn}[Critical Neighborhood]\label{cn}
	Given a reported type profile $\theta'$ and an allocation $\pi(\theta')$, we call $\tilde{r}_i(\theta')=r_i'\cap AG(\theta')\cup I'_i$ the critical neighborhood of an intermediary $i$, where $I'_i$ are intermediaries in $r_i'$.
\end{defn}

Technically, $\tilde{r}_i(\theta')$ characterizes a set of invitations that are important for the allocation. To explain this, let $y$ denote the set of winning buyers in $\pi(\theta')$. If $i$ does not invite $\tilde{r}_i(\theta')$ to the sale, then all transactions to $y$, if there is any, {\it surely} do not pass $i$. In this sense, $\tilde{r}_i(\theta')$ are ``critical" for $i$ to reach the winners. Given any reported type profile $\theta'$, it is straightforward that for each intermediary $i$, her critical neighborhood $\tilde{r}_i(\theta')$ is existing and unique. In the example of Figure \ref{example}, agent $B$'s critical neighborhood is $\{b_1, C, E\}$. 

Based on the definition of critical neighborhood, we are ready to introduce our mechanism.
\subsection{The Mechanism}
The proposed mechanism applies an efficient allocation policy as the VCG does, and charges each intermediary the social welfare decrease of a particular transaction.  Comparing to the payments in the VCG mechanism, which can elicit both true bids from the buyers and true neighbor information from the intermediaries, the proposed mechanism only awards intermediaries for their diffusion efforts. The full description of the mechanism is given below, where $W^{(K)}(\theta')$ represents for the $K$th highest value in $\{W_j(\theta')\}_{j\in F(\theta')}$ and is defined as zero if the $K$th highest value is negative.

\begin{framed}
	\noindent\textbf{Critical Neighborhood Auction (CNA)}\\
	\rule{\textwidth}{0.5pt}
	\begin{itemize}
		\item \textbf{Allocation policy:} Given a reported type profile $\theta'$, apply the efficient allocation $\pi^*(\theta')$ (with random tie-breaking).
		\item \textbf{Payment policy:} The payment $x_k(\theta')$ is defined for each category of agents as follows:
		\begin{small}
			\begin{equation*}
				\begin{cases}
				W^{(K)}(\theta'_{-k})  - W^{(K)}(r_k'\setminus \tilde{r}_k(\theta'), \theta'_{-k}) &\text{$k\in I$,}\\
				W^*(\theta'_{-k})-W_{-k}^*(\theta') &\text{$k\in B$.}\\
				\end{cases}
			\end{equation*}
		\end{small}
	\end{itemize}
\end{framed}

In the critical neighborhood auction, each buyer pays her VCG payment. For each intermediary $i$, her payment is identical to the difference between the $K$th highest social welfare without her participation and the $K$th highest social welfare without diffusing the sale information to $\tilde{r}_i(\theta')$. The payment policy for each intermediary $i$ ensures that $1)$ $i$ could obtain a reward only if her diffusion improves the social welfare; and $2)$ more transactions $i$ is involved will lead to more profit for herself. As a result, the intermediaries are incentivized to diffuse the sale to all their neighbors to maximize their utilities. 

In the running example of the VCG mechanism, the seller needs pay $9$ to intermediary $B$, while she pays less in the CNA. If $B$ does not join in the sale, then the $3$rd highest social welfare is $4$ which is obtained in transaction $\{s,A,a_1\}$. If $B$ does not diffuse the sale to her critical neighborhood $\tilde{r}_B(\theta)=\{b_1,C, E\}$, the $3$rd highest social welfare becomes $7$ which is created by transaction $\{s,B,b_2\}$. Thus, $B$'s payment is $4-7=-3$, i.e., the seller only pays $B$ 3. Similarly, the seller pays $D$ $3$ and pays $0$ to other intermediaries. The buyers pay the same as they are in the VCG mechanism. Eventually, the seller's revenue by applying the CNA is $24$, which is a substantial improvement comparing to $-22$- the revenue achieved in the VCG mechanism.



\subsection{Theoretical Analysis}
Both the VCG and the CNA use the efficient allocation policy. For the efficient allocation policy, two observations should be highlighted. Firstly, the leaf nodes in the efficient allocation graph correspond to the winning buyers and each simple path from the root (i.e., the seller) to a leaf node corresponds to a winning transaction. Secondly, only the intermediaries in the efficient allocation graph can affect the allocation via strategic diffusion. For other intermediaries, their diffusion do not influence the transaction paths to the winning buyers, and thereby cannot affect the ultimate allocation, i.e., their payments and utilities are all zero.
Based on above observations and the concept of critical neighborhood, we can prove that the CNA is IC and IR.

\begin{theorem}\label{ir_ic}
	The CNA is efficient, individually rational, incentive-compatible and non-degenerated.
\end{theorem}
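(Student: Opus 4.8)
The plan is to obtain efficiency by construction and to derive IC and IR from the characterization in Theorem~\ref{ic_ir}, whose two conditions conveniently decouple the buyers from the intermediaries. Efficiency is immediate, since CNA allocates according to $\pi^*(\theta')$ by definition. For the first (buyer) condition I would first exploit the separability of the transaction costs: because each transaction's cost is the sum of the edge weights along its unique tree path, the social welfare decomposes as $\sum_{j}z_j(\theta')W_j(\theta')$, and $\pi^*$ simply picks the transactions with the top $K$ nonnegative $W_j(\theta')$. Value-monotonicity then follows at once, as raising $v_j'$ raises $W_j(\theta')$ and cannot push a winning transaction out of the top $K$. To show a winner pays her critical bid I would evaluate the VCG payment $W^*(\theta'_{-j})-W_{-j}^*(\theta')$ through this decomposition: deleting a winning $j$ replaces her slot by the best excluded transaction, whose welfare is $W^{(K)}(\theta'_{-j})$, so $x_j(\theta')=W^{(K)}(\theta'_{-j})+\sum_{(i,i+1)\in T_j}w_{ii+1}=v^*_j(\theta')$, while deleting a loser changes nothing and gives $x_j(\theta')=0$.

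For the second (intermediary) condition I would reduce everything to a single monotonicity statement about the second term of the intermediary payment, since the first term $W^{(K)}(\theta'_{-k})$ is independent of $k$'s own report. The key object is $W^{(K)}(r_k'\setminus\tilde r_k(\theta'),\theta'_{-k})$, and I would characterize it as the $K$th highest welfare over the pool of transactions that avoid every \emph{critical} edge of $k$ (the edges from $k$ to a member of $\tilde r_k$). Concretely, reporting the residual set $r_k'\setminus\tilde r_k(\theta')$ keeps $k$ valid and linked only to her losing direct buyers, so the admissible transactions are exactly those never traversing $k$ together with $k$'s losing-buyer transactions. Establishing this pool description from Definition~\ref{cn} and the two structural observations (leaves of $AG^*(\theta')$ are winners; intermediaries off $AG^*(\theta')$ are inert) is the technical heart of the argument.

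Given the pool description, I would prove monotonicity by set inclusion. Enlarging $k$'s report from $r_k''$ to $r_k'\supseteq r_k''$ only adds candidate transactions globally, and added competition can never promote a transaction that was outside the top $K$; hence the winners among $k$'s direct buyers can only shrink, the losing direct buyers in the residual can only grow, and the pool under $r_k'$ therefore contains the pool under $r_k''$. Since the $K$th order statistic is monotone under enlarging the candidate multiset, $W^{(K)}(r_k'\setminus\tilde r_k(\theta'),\theta'_{-k})\ge W^{(K)}(r_k''\setminus\tilde r_k(\theta'),\theta'_{-k})$, which is precisely $x_k(r_k',\theta'_{-k})\le x_k(r_k'',\theta'_{-k})$. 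Non-positivity then comes for free as the special case $r_k''=\emptyset$, where $k$ reaches nobody, the pool coincides with the transactions avoiding $k$, and the payment is $0$; so $x_k(r_k',\theta'_{-k})\le 0$. Together with the buyer side this verifies Theorem~\ref{ic_ir} and yields IC and IR.

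Finally, for non-degeneracy I would, for each intermediary $i$, build a type profile in which $i$'s full diffusion routes a buyer of her subtree into the top $K$, so that connecting her critical neighborhood strictly raises the $K$th welfare and gives $x_i<0$, whereas a truncated report $r_i'\subset r_i$ that withholds this link yields $x_i=0$; recalling $u_i=-x_i$, this gives $u_i(r_i,\theta,M)>u_i(r_i',\theta',M)\ge 0$ as Definition~\ref{non_trivial} requires. The step I expect to fight hardest for is the pool characterization and the pool-inclusion comparison, precisely because $\tilde r_k(\theta')$ co-varies with the report, so the residual sets under $r_k'$ and $r_k''$ are not literally nested; the resolution is the observation that added competition only demotes and never promotes, which restores the needed inclusion of the pools.
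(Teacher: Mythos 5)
Your treatment of efficiency, the buyer side, and intermediary IC/IR is essentially the paper's own argument in different packaging. The paper likewise gets efficiency by construction, disposes of buyers by noting they pay VCG prices (via exactly your top-$K$ decomposition of $W^*$), proves IR from the inclusion $F(\theta'_{-k})\subset F(r_k'\setminus\tilde{r}_k(\theta'),\theta'_{-k})$, and proves IC by establishing $\tilde{r}_k\subseteq (r_k\setminus r_k')\cup\tilde{r}_k'$, equivalently that the residual sets are nested, $r_k'\setminus\tilde{r}_k'\subseteq r_k\setminus\tilde{r}_k$. That is precisely your ``pool'' inclusion: since $\tilde{r}_k$ contains all intermediary neighbors and all allocation-graph neighbors, the residual report reaches exactly $k$'s \emph{losing} direct buyers, and your ``added competition only demotes, never promotes'' observation is the same fact the paper invokes (on a tree the welfare $W_j$ of a fixed transaction is report-independent, so enlarging the market only raises the top-$K$ cutoff). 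Routing the conclusion through the payment-monotonicity condition of Theorem~\ref{ic_ir} rather than comparing utilities directly is a cosmetic difference.

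There is, however, a genuine flaw in your non-degeneracy construction. You propose a profile in which $i$'s full diffusion ``routes a buyer of her subtree into the top $K$'' and claim this gives $x_i<0$. It does not: $x_i=W^{(K)}(\theta'_{-i})-W^{(K)}(r_i\setminus\tilde{r}_i(\theta'),\theta'_{-i})$, and since the residual report consists only of $i$'s \emph{losing} direct buyers, a routed winner contributes nothing to the second term. Concretely, if $i$'s sole neighbor is one buyer of enormous value, that buyer wins, yet $\tilde{r}_i$ swallows her, the residual is empty, $W^{(K)}(\emptyset,\theta'_{-i})=W^{(K)}(\theta'_{-i})$, and $x_i=0$; the paper's own running example exhibits this, where intermediary $E$ routes the winner $e_2$ into the top three and still receives payment $0$. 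What you need — and Definition~\ref{non_trivial} permits, since you may choose the entire profile $\theta_{-i}$ — is that $i$ additionally owns a losing direct buyer whose transaction welfare strictly exceeds $W^{(K)}(\theta'_{-i})$, as with intermediary $B$ and buyer $b_2$ in the example (payment $4-7=-3$). With that repair, comparing against a truncated report (utility $0$ by IR) yields $u_i(r_i,\theta,M)>u_i(r_i',\theta',M)\ge 0$, which is exactly the paper's one-line justification: $i$ gains strictly whenever the replacement winning transactions created by removing $\tilde{r}_i(\theta')$ still pass through $i$.
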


It should be noted that the critical neighborhood is a key component for proving Theorem \ref{ir_ic}. Our following results show that it is also crucial for improving the seller's revenue. In the following, we analyze the revenue performance of the CNA. By comparing the payments of the VCG and the CNA, we have the following result.

\begin{prop}
	$R(\text{CNA},\theta')\ge R(\text{VCG},\theta')$ for all $\theta'$.
\end{prop}

Recall that each agent' payment is correlated with the underlying market structures, therefore it is hard to evaluate the seller's revenue by directly summing up all agents' payments. Instead, we first decompose the payments into several groups; then provide a lower bound for each group's payments; by aggregating the lower bound in each group, we can show that the CNA also outperforms the VCG mechanism without intermediaries (i.e., applying the VCG mechanism only in the neighboring buyers of the seller $s$). Figure \ref{str_proof} demonstrates the computation of the revenue w.r.t. Figure \ref{example}. 

Before illustrating our result, we first introduce several useful notations. Given an efficient allocation graph $AG^*(\theta')$ and a leaf node $j$, let $T^*_j$ be the $s$-$j$ path in $AG^*(\theta')$ (for convenience we remove the seller $s$ from $T^*_j$). In addition, let $X(T^*_j)=\sum_{k\in T^*_j}x_k(\theta')$ denote the path payment and $C(T^*_j)=\sum_{(i,i+1)\in T^*_j}w_{ii+1}$ denote the associated path cost. We can prove the following for all $T^*_j$ in $AG^*(\theta')$.

\begin{lemma}
	For all $T^*_j$ in $AG^*(\theta')$, $X(T^*_j)-C(T^*_j)\ge W^{(K)}(\theta'_{-1})$, where $1$ is the first agent in $T^*_j$.
\end{lemma}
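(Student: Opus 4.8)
The plan is to expand the path payment using the CNA payment rules and reorganize it into a telescoping sum whose terms are all nonnegative except for a leading term equal to $W^{(K)}(\theta'_{-1})$. Write the path as $T^*_j=(a_1,a_2,\ldots,a_m,j)$ with $a_1=1$, where $a_1,\ldots,a_m$ are the intermediaries and $j$ is the winning buyer. Since each intermediary on the path pays her CNA amount and $j$ pays her VCG amount,
\begin{equation*}
X(T^*_j)=\sum_{\ell=1}^{m}\big[W^{(K)}(\theta'_{-a_\ell})-W^{(K)}(r'_{a_\ell}\setminus\tilde r_{a_\ell}(\theta'),\theta'_{-a_\ell})\big]+x_j(\theta').
\end{equation*}
The first auxiliary step is to rewrite the buyer term. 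Since $j$ is a winner under the efficient allocation, the critical-bid characterization behind Theorem~\ref{ic_ir} gives that her VCG payment equals her critical bid, which is her path cost plus the $K$-th highest competing welfare: $x_j(\theta')=C(T^*_j)+W^{(K)}(\theta'_{-j})$. Hence $x_j(\theta')-C(T^*_j)=W^{(K)}(\theta'_{-j})$, which removes the cost term cleanly from the target.

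The heart of the argument is a monotonicity property of $W^{(K)}$ in the tree: if the set of valid buyers in one scenario is contained in that of another, and every common buyer keeps the same $s$-path (hence the same $W_j$), then the former's $W^{(K)}$ is at most the latter's, because $W^{(K)}$ is just the $K$-th largest of the $W_j$ over present buyers truncated at zero, and enlarging the pool can only raise it. I will apply this twice. Using the tree structure and the definition $\tilde r_{a_\ell}(\theta')=(r'_{a_\ell}\cap AG^*(\theta'))\cup I'_{a_\ell}$, withholding $\tilde r_{a_\ell}$ leaves valid exactly the buyers outside the subtree rooted at $a_\ell$ together with $a_\ell$'s directly-invited non-allocation buyers, and this buyer set is contained in the one obtained by deleting only $a_{\ell+1}$ (which additionally keeps all of $a_\ell$'s remaining subtree except $a_{\ell+1}$'s). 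Thus
\begin{equation*}
W^{(K)}(r'_{a_\ell}\setminus\tilde r_{a_\ell}(\theta'),\theta'_{-a_\ell})\le W^{(K)}(\theta'_{-a_{\ell+1}}),\quad \ell=1,\ldots,m-1,
\end{equation*}
and, comparing the last intermediary's withholding scenario against deleting the leaf $j$,
\begin{equation*}
W^{(K)}(r'_{a_m}\setminus\tilde r_{a_m}(\theta'),\theta'_{-a_m})\le W^{(K)}(\theta'_{-j}).
\end{equation*}

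Finally I telescope. Substituting the buyer identity and regrouping,
\begin{align*}
X(T^*_j)-C(T^*_j) &= W^{(K)}(\theta'_{-a_1})+\sum_{\ell=1}^{m-1}\Big[W^{(K)}(\theta'_{-a_{\ell+1}})-W^{(K)}(r'_{a_\ell}\setminus\tilde r_{a_\ell}(\theta'),\theta'_{-a_\ell})\Big]\\
&\quad+\Big[W^{(K)}(\theta'_{-j})-W^{(K)}(r'_{a_m}\setminus\tilde r_{a_m}(\theta'),\theta'_{-a_m})\Big].
\end{align*}
Every bracketed difference is nonnegative by the two inequalities above, so $X(T^*_j)-C(T^*_j)\ge W^{(K)}(\theta'_{-a_1})=W^{(K)}(\theta'_{-1})$, as claimed.

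The main obstacle is making the set-inclusion comparisons rigorous: I must verify, using the tree assumption and the exact form of the critical neighborhood, that withholding $\tilde r_{a_\ell}$ prunes precisely the descendants reached through intermediaries and through allocation buyers, leaving only $a_\ell$'s directly-invited non-allocation buyers, and that these survivors retain their original $s$-paths and costs so their $W_j$ values are unchanged across the two scenarios. The edge cases --- the zero-truncation in $W^{(K)}$, instances with fewer than $K$ valid buyers, and the degenerate path $m=0$ where $j\in r_s$ --- also need checking, but each follows the same monotonicity reasoning.
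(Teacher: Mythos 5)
Your proof is correct and follows essentially the same route as the paper's: expand the path payment, rewrite the winner's VCG payment as $C(T^*_j)+W^{(K)}(\theta'_{-j})$, and telescope using the tree-induced inclusions $F(r'_{k}\setminus\tilde r_{k}(\theta'),\theta'_{-k})\subseteq F(\theta'_{-(k+1)})$ to show each residual difference of $W^{(K)}$ terms is nonnegative. The set-inclusion step you flag as needing care is exactly the step the paper also justifies only by appeal to the tree structure, so your version is, if anything, slightly more explicit.
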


Denote by $n_k$ the number of transactions including agent $k$ in the allocation $\pi^*(\theta')$. Clearly, $n_k=1$ for all winning buyers and $n_k\ge 1$ for all intermediaries in $\pi^*(\theta')$. Since $x_k(\theta')\le 0$ for all intermediaries, we have \begin{small}
\begin{equation}
	\sum_{k\in T^*_j}\frac{x_k(\theta')}{n_k}\ge \sum_{k\in T^*_j}x_k(\theta')=X(T^*_j).
\end{equation} \end{small}
Given the fact that \begin{small}
\begin{equation}
	\sum_{j\in \text{leaf nodes}}(\sum_{k\in T^*_j}\frac{x_k(\theta')}{n_k}-C(T^*_j))=\sum_{k\in AG^*(\theta)}x_k(\theta')-C(\theta'),
\end{equation}\end{small}
we can directly prove the following.

\begin{theorem}\label{last}
	$R(\text{CNA},\theta')\ge R(\text{VCG-WI}, \theta')\ge 0$ for all $\theta'$, where $\text{VCG-WI}$ refers to the VCG Without Intermediaries.
\end{theorem}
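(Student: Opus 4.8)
The plan is to combine the per-path bound supplied by the preceding lemma with the two displayed identities just above the theorem, and then to compare the resulting aggregate lower bound against the VCG-WI revenue. First I would assemble a lower bound on $R(\text{CNA},\theta')$ by chaining the three facts already in hand: the inequality $\sum_{k\in T^*_j} x_k(\theta')/n_k \ge X(T^*_j)$, the preceding lemma $X(T^*_j)-C(T^*_j)\ge W^{(K)}(\theta'_{-1_j})$, and the identity $\sum_{j}\big(\sum_{k\in T^*_j} x_k(\theta')/n_k - C(T^*_j)\big) = \sum_{k\in AG^*(\theta')} x_k(\theta') - C(\theta')$, which equals $R(\text{CNA},\theta')$ since every agent outside $AG^*(\theta')$ pays zero. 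Here $1_j$ denotes the first agent (necessarily a direct neighbor of $s$) on the $s$-$j$ path $T^*_j$. This yields $R(\text{CNA},\theta') \ge \sum_{j\in\text{leaves}} W^{(K)}(\theta'_{-1_j})$, so it remains to prove $\sum_{j} W^{(K)}(\theta'_{-1_j}) \ge R(\text{VCG-WI},\theta') \ge 0$.

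Next I would pin down the VCG-WI revenue. Since VCG-WI is an ordinary multi-unit, unit-demand VCG over the direct buyers $D=r_s\cap B$, standard single-parameter theory gives that every winner pays the common critical value $p$, defined as the $(K+1)$-th largest direct-buyer transaction welfare (taken to be $0$ when that value is negative or when $|D|\le K$). Hence $R(\text{VCG-WI},\theta')=\ell_{WI}\cdot p$, where $\ell_{WI}$ is the number of winning direct buyers; because $p\ge 0$ by the max-with-zero convention, this already establishes the second inequality $R(\text{VCG-WI},\theta')\ge 0$.

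The heart of the argument is the per-leaf bound $W^{(K)}(\theta'_{-1_j}) \ge p$ for every leaf $j$, proved by cases on $1_j$. If $1_j$ is an intermediary, deleting it keeps every direct buyer valid, so the $K$-th largest valid welfare in $\theta'_{-1_j}$ is at least the $K$-th largest direct-buyer welfare, which is $\ge p$. If $1_j$ is itself a winning direct buyer, then $T^*_j$ has length one, and deleting $1_j$ removes exactly one of the top-$K$ elements of $D$; the $K$-th order statistic of the remaining direct buyers then collapses precisely to $p$, and admitting the other subtrees can only increase the value, so again $W^{(K)}(\theta'_{-1_j})\ge p$. Summing over the $\ell$ leaves gives $\sum_j W^{(K)}(\theta'_{-1_j}) \ge \ell\, p$. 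Finally, since every profitable direct-buyer transaction is also a profitable valid transaction and both winner counts are capped at $K$, we have $\ell\ge \ell_{WI}$; combined with $p\ge 0$ this gives $\ell\, p \ge \ell_{WI}\, p = R(\text{VCG-WI},\theta')$, completing the chain.

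I expect the main obstacle to be the per-leaf bound in the second case, where $1_j$ is a winning direct buyer. One must verify that such a path has length one, that $1_j$ indeed lies among the top $K$ members of $D$ (which follows because a global top-$K$ winner can be strictly outranked by at most $K-1$ transactions, in particular by at most $K-1$ elements of $D$), and that after its deletion the $K$-th order statistic of $D$ equals $p$ exactly. The other delicate point is keeping the ``$\max$ with zero'' conventions for $W^{(K)}$ and for $p$ mutually consistent across the degenerate regimes $|D|\le K$ and negative welfares, so that all of the per-leaf inequalities and the final count comparison remain valid.
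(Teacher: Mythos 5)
Your proposal is correct and follows essentially the same route as the paper: the same chaining of the per-path lemma with the two displayed identities to get $R(\text{CNA},\theta')\ge\sum_{j}W^{(K)}(\theta'_{-1_j})$, followed by comparing each $W^{(K)}(\theta'_{-1_j})$ to the $(K+1)$-th highest direct-buyer value that constitutes the VCG-WI price. The only differences are presentational: where the paper derives the per-leaf bound uniformly from $r_s\subseteq F(\theta'_{-1})\cup\{1\}$ (an order-statistic shift by one) and tersely writes $\sum_j v^{(K+1)}=K\cdot v^{(K+1)}$, you prove the same bound by a case split on whether $1_j$ is an intermediary or a winning direct buyer and make the leaf-count comparison $\ell\ge\ell_{WI}$ and the zero conventions explicit, which correctly handles the degenerate regimes (fewer than $K$ profitable transactions, $|r_s\cap B|\le K$) that the paper glosses over.
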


\begin{figure}[t]
	\centering
	\includegraphics[width=3.4in]{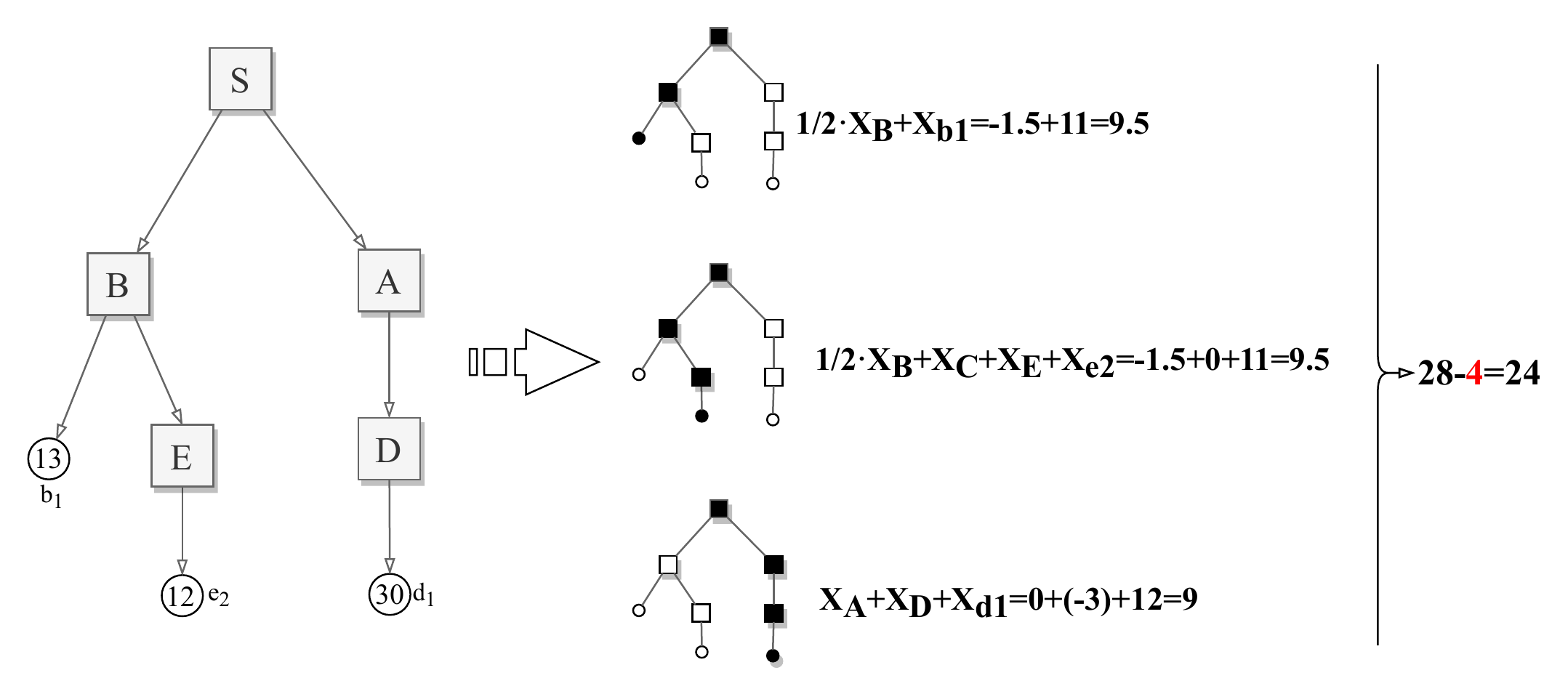}
	\caption{The decomposition and aggregation process for computing the seller's revenue, where the left corresponds to the efficient allocation graph for three commodities and the red number in the right denotes the transaction costs.}\label{str_proof}
\end{figure}
For example, if we apply $\text{VCG-WI}$ in Figure \ref{example}, the three commodities will be allocated to buyers $s_2$, $s_3$, $s_4$, and each winner pays $1$ to the seller, i.e., the seller's revenue is $3<24$. Theorem \ref{last} demonstrates the advantages of diffusion auctions over traditional auctions. 
Though the seller has to pay intermediaries to reach more buyers with high valuations, the benefits from these additional buyers can cover the loss. 

\section{Conclusions}\label{sec:con}
This work considers an auction market with intermediaries, and proposes a diffusion auction, called critical neighborhood auction, for a seller selling multiple homogeneous commodities to unit-demand buyers.
The mechanism not only achieves the maximum social welfare within the market, but also increases the seller's revenue comparing to traditional auctions.
There are many problems worth further investigation. For example, this work assumes that the commodities are homogeneous and the buyers are unit-demand, which simplifies the model and the analysis. An extension is to consider heterogeneous commodities where the buyers are single-minded or additive.
In addition, a marketplace often gathers substantial amounts of sellers and buyers with various supplies and demands, designing efficient markets to incentivize information spreading, and to arrange the transactions globally is also an important future work.

\newpage
\bibliographystyle{named}
\bibliography{ijcai22}

\begin{thebibliography}{}

\bibitem[\protect\citeauthoryear{Archer and Tardos}{2001}]{archer2001truthful}
Aaron Archer and {\'E}va Tardos.
\newblock Truthful mechanisms for one-parameter agents.
\newblock In {\em FOCS}, pages 482--491, 2001.

\bibitem[\protect\citeauthoryear{Auletta \bgroup \em et al.\egroup
  }{2004}]{auletta2004deterministic}
Vincenzo Auletta, Roberto De~Prisco, Paolo Penna, and Giuseppe Persiano.
\newblock Deterministic truthful approximation mechanisms for scheduling
  related machines.
\newblock In {\em STACS}, pages 608--619, 2004.

\bibitem[\protect\citeauthoryear{Clarke}{1971}]{clarke1971multipart}
Edward~H Clarke.
\newblock Multipart pricing of public goods.
\newblock {\em Public Choice}, 11(1):17--33, 1971.

\bibitem[\protect\citeauthoryear{Emek \bgroup \em et al.\egroup
  }{2011}]{emek2011mechanisms}
Yuval Emek, Ron Karidi, Moshe Tennenholtz, and Aviv Zohar.
\newblock Mechanisms for multi-level marketing.
\newblock In {\em Proc. of the 12th ACM Conf. on Electronic Commerce}, pages
  209--218. ACM, 2011.

\bibitem[\protect\citeauthoryear{Groves}{1973}]{groves1973incentives}
Theodore Groves.
\newblock Incentives in teams.
\newblock {\em Econometrica}, 41(4):617--631, 1973.

\bibitem[\protect\citeauthoryear{Guo and Hao}{2021}]{guo2021emerg}
Yuhang Guo and Dong Hao.
\newblock Emerging methods of auction design in social networks.
\newblock In {\em Proc. of the 30th Int. Joint Conf. on Artificial
  Intelligence}, pages 4434--4441, 2021.

\bibitem[\protect\citeauthoryear{Jeong and Lee}{2020}]{jeong2020groupwise}
Seungwon~(Eugene) Jeong and Joosung Lee.
\newblock The groupwise-pivotal referral mechanism: Core-selecting referral
  strategy-proof mechanism.
\newblock In {\em Available at SSRN: https://ssrn.com/abstract=3574093}, 2020.

\bibitem[\protect\citeauthoryear{Kawasaki \bgroup \em et al.\egroup
  }{2020}]{kawasaki2020strategy}
Takehiro Kawasaki, Nathana{\"e}l Barrot, Seiji Takanashi, Taiki Todo, and
  Makoto Yokoo.
\newblock Strategy-proof and non-wasteful multi-unit auction via social
  network.
\newblock In {\em Proc. of the AAAI Conf. on Artificial Intelligence}, pages
  2062--2069, 2020.

\bibitem[\protect\citeauthoryear{Kawasaki \bgroup \em et al.\egroup
  }{2021}]{kawasaki2021mechanism}
Takehiro Kawasaki, Ryoji Wada, Taiki Todo, and Makoto Yokoo.
\newblock Mechanism design for housing markets over social networks.
\newblock In {\em Proc. of the 20th Int. Conf. on Autonomous Agents and
  MultiAgent Systems}, pages 692--700, 2021.

\bibitem[\protect\citeauthoryear{{Kleinberg} and
  {Raghavan}}{2005}]{kleinberg2005query}
J.~{Kleinberg} and Prabhakar {Raghavan}.
\newblock Query incentive networks.
\newblock In {\em Proc. of the 46th Annual IEEE Symposium on Foundations of
  Computer Science}, pages 132--141, 2005.

\bibitem[\protect\citeauthoryear{Krishna}{2009}]{krishna2009auction}
Vijay Krishna.
\newblock {\em Auction theory}.
\newblock Academic Press, 2009.

\bibitem[\protect\citeauthoryear{Leskovec \bgroup \em et al.\egroup
  }{2006}]{leskovec2007dynamics}
Jure Leskovec, Lada~A. Adamic, and Bernardo~A. Huberman.
\newblock The dynamics of viral marketing.
\newblock In {\em Proc. of the 7th ACM Conf. on Electronic Commerce}, pages
  228--237, 2006.

\bibitem[\protect\citeauthoryear{Li \bgroup \em et al.\egroup
  }{2017}]{li2017mechanism}
Bin Li, Dong Hao, Dengji Zhao, and Tao Zhou.
\newblock Mechanism design in social networks.
\newblock In {\em Proc. of the 31st AAAI Conference on Artificial
  Intelligence}, pages 586--592, 2017.

\bibitem[\protect\citeauthoryear{Li \bgroup \em et al.\egroup
  }{2018}]{Li2018CustomerSI}
Bin Li, Dong Hao, Dengji Zhao, and Tao Zhou.
\newblock Customer sharing in economic networks with costs.
\newblock In {\em Proc. of the 27th Int. Joint Conf. on Artificial
  Intelligence}, pages 368--374, 2018.

\bibitem[\protect\citeauthoryear{{Li} \bgroup \em et al.\egroup
  }{2019}]{li2019graph}
Bin {Li}, Dong {Hao}, Dengji {Zhao}, and Makoto {Yokoo}.
\newblock Diffusion and auction on graphs.
\newblock In {\em Proc. of the 28th Int. Joint Conf. on Artificial
  Intelligence}, pages 435--441, 2019.

\bibitem[\protect\citeauthoryear{Li \bgroup \em et al.\egroup
  }{2022}]{LI2022103631}
Bin Li, Dong Hao, Hui Gao, and Dengji Zhao.
\newblock Diffusion auction design.
\newblock {\em Artificial Intelligence}, 303:103631, 2022.

\bibitem[\protect\citeauthoryear{Liu \bgroup \em et al.\egroup
  }{2021}]{liu2021budget}
Xiang Liu, Weiwei Wu, Minming Li, and Wanyuan Wang.
\newblock Budget feasible mechanisms over graphs.
\newblock In {\em Proc. of the AAAI Conf. on Artificial Intelligence}, pages
  5549--5556, 2021.

\bibitem[\protect\citeauthoryear{Moustafa \bgroup \em et al.\egroup
  }{2021}]{moustafa2021diffusion}
Ahmed Moustafa, Pankaj Mishra, and Nagoya~Kogyo Daigaku.
\newblock A diffusion mechanism for multi-unit commodity allocation in economic
  networks.
\newblock {\em Electronic Commerce Research and Applications}, page 101078,
  2021.

\bibitem[\protect\citeauthoryear{Myerson}{1981}]{Myerson1981OptimalAD}
Roger~B. Myerson.
\newblock Optimal auction design.
\newblock {\em Mathematics of Operations Research}, 6:58--73, 1981.

\bibitem[\protect\citeauthoryear{Vickrey}{1961}]{vickrey1961counterspeculation}
William Vickrey.
\newblock Counterspeculation, auctions, and competitive sealed tenders.
\newblock {\em The Journal of Finance}, 16(1):8--37, 1961.

\bibitem[\protect\citeauthoryear{Yang \bgroup \em et al.\egroup
  }{2018}]{yang2018incentive}
Liu Yang, Hongbin Zhu, Haifeng Wang, Hua Qian, and Yang Yang.
\newblock Incentive propagation mechanism of computation offloading in
  fog-enabled d2d networks.
\newblock In {\em the IEEE 23rd Int. Conf. on Digital Signal Processing}, pages
  1--4, 2018.

\bibitem[\protect\citeauthoryear{{Zhang} \bgroup \em et al.\egroup
  }{2020a}]{zhang2020collaborative}
Wen {Zhang}, Yao {Zhang}, and Dengji {Zhao}.
\newblock Collaborative data acquisition.
\newblock In {\em Proc. of the 19th Int. Conf. on Autonomous Agents and
  MultiAgent Systems}, pages 1629--1637, 2020.

\bibitem[\protect\citeauthoryear{{Zhang} \bgroup \em et al.\egroup
  }{2020b}]{zhang2019incentivize}
Wen {Zhang}, Dengji {Zhao}, and Yao {Zhang}.
\newblock Incentivize diffusion with fair rewards.
\newblock In {\em Proc. of the 24th European Conf. on Artificial Intelligence},
  pages 251--258, 2020.

\bibitem[\protect\citeauthoryear{{Zhang} \bgroup \em et al.\egroup
  }{2020c}]{zhang2020sybil}
Yao {Zhang}, Xiuzhen {Zhang}, and Dengji {Zhao}.
\newblock Sybil-proof answer querying mechanism.
\newblock In {\em Proc. of the 29th Int. Joint Conf. on Artificial
  Intelligence}, pages 422--428, 2020.

\bibitem[\protect\citeauthoryear{Zhao \bgroup \em et al.\egroup
  }{2018}]{Zhao2018Multi}
Dengji Zhao, Bin Li, Junping Xu, Dong Hao, and Nicholas~R. Jennings.
\newblock Selling multiple items via social networks.
\newblock In {\em Proc. of the 17th Int. Conf. on Autonomous Agents and
  MultiAgent Systems}, pages 68--76, 2018.

\bibitem[\protect\citeauthoryear{Zhao}{2021}]{zhao2021social}
Dengji Zhao.
\newblock Mechanism design powered by social interactions.
\newblock In {\em Proc. of the 20th Int. Conf. on Autonomous Agents and
  MultiAgent Systems}, pages 63–--67, 2021.

\end{thebibliography}
	\appendix
\section*{Appendix}
\setcounter{theorem}{0}
\setcounter{prop}{0}
\setcounter{lemma}{0}
\section{Omitted Proofs for Section 3}

\begin{theorem}\label{app_ic_ir}
	An auction mechanism $(\pi,x)$ is incentive-compatible and individually rational if and only if 
	\begin{itemize}
		\item [1.] $\pi$ is value-monotonic and each winner pays her critical bid and the losing buyers pay zero;
		\item [2.] $x_k(r_k',\theta'_{-k})\le x_k(r_k'',\theta'_{-k})\le 0$ for all $k\in I$, all $\theta'=(r_k', \theta'_{-k})$ and all $r_i''\subseteq r_i'$.
	\end{itemize}
\end{theorem}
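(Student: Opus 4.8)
The plan is to exploit the separation between the two deviation channels noted in the text. A buyer's report is only her bid, and an intermediary's report is only her neighbor set; moreover, the incentive and participation constraints are unilateral (they must hold for every fixed profile ${\bf \theta}'_{-k}$ of the other agents' reports). I would therefore treat the mechanism as IC and IR if and only if, for each agent $k$ and each fixed ${\bf \theta}'_{-k}$, agent $k$ best-responds by reporting truthfully and obtains non-negative utility there. This lets me establish both directions of the equivalence one agent-class at a time, matching condition~1 to the buyers and condition~2 to the intermediaries.

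For the buyer side I would observe that, holding ${\bf \theta}'_{-j}$ fixed, buyer $j$'s report $v_j'$ enters the outcome only through the allocation indicator $z_j$: her bid cannot alter the set of valid agents, since buyers are sinks that do not forward the auction information. Hence $j$ faces a textbook single-parameter problem with utility $z_j({\bf \theta}')v_j - x_j({\bf \theta}')$. Restricting, as is standard, to normalized mechanisms in which losing buyers pay nothing, I would invoke the cited single-parameter characterization: IC and IR for $j$ hold iff $\pi$ is value-monotonic in $v_j$ and a winning $j$ is charged her critical bid $v^*_j({\bf \theta}')$. Ranging over all buyers yields condition~1.

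For the intermediary side I would argue both directions directly from the utility $u_i=-x_i$. For necessity, fix any set $S$ and regard it as $i$'s true type $r_i$: IR at the truthful report gives $-x_i(S,{\bf \theta}'_{-i})\ge 0$, i.e. $x_i(S,{\bf \theta}'_{-i})\le 0$, and since $S$ is arbitrary this is the non-positivity $x_i(r_i',{\bf \theta}'_{-i})\le 0$ for every report. For the monotonicity, take any $r_i''\subseteq r_i'$ and now treat $r_i'$ as the true type; because an intermediary's misreport space is exactly the subsets of her true type, $r_i''$ is an admissible deviation, so IC gives $-x_i(r_i',{\bf \theta}'_{-i})\ge -x_i(r_i'',{\bf \theta}'_{-i})$, which is precisely $x_i(r_i',{\bf \theta}'_{-i})\le x_i(r_i'',{\bf \theta}'_{-i})$. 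For sufficiency, assume condition~2: taking $r_i''=r_i'=r_i$ gives $x_i(r_i,{\bf \theta}'_{-i})\le 0$, hence IR; and for any deviation $r_i'\subseteq r_i$, applying the monotonicity with the true set $r_i$ in the larger slot gives $x_i(r_i,{\bf \theta}'_{-i})\le x_i(r_i',{\bf \theta}'_{-i})$, so truthful reporting maximizes $-x_i$ and IC holds.

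I expect the main obstacle to be the bookkeeping around the universal quantifiers: condition~2 is stated for \emph{all} reported sets $r_i'$, whereas IC and IR only constrain behavior at an agent's (unknown) true type. The resolution, used above, is that IC and IR must hold for every possible true type profile, so letting the true type range over all sets converts the per-type inequalities into the uniform statement of condition~2. A secondary point requiring care is justifying that a buyer's bid cannot change validity, so that the single-parameter reduction is legitimate, and that an intermediary's deviation, although it may change downstream validity and the entire allocation, enters her own incentive constraint only through the scalar $x_i(\cdot,{\bf \theta}'_{-i})$; both follow from the model, the former because buyers do not diffuse the auction information.
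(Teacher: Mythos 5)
Your proposal is correct and takes essentially the same route as the paper's own (very terse) proof: the buyer part is delegated to Myerson's single-parameter characterization for normalized mechanisms, and the intermediary part is read off directly from the definitions of IC and IR applied to the utility $u_i=-x_i$. Your quantifier bookkeeping---letting the true type range over all admissible sets to turn the per-type IC/IR inequalities into the uniform statement of condition~2---simply makes explicit what the paper leaves implicit.
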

\begin{proof}
	The part for the buyers is the traditional IC property of single-parameter domains, which has been observed in many times, see [Myerson, 1981] for an example; and the part for the intermediaries can be directly obtained by the definition of IC and IR. 
\end{proof}
\section{Omitted Proofs for Section 4}
\begin{prop}
	The VCG mechanism is incentive-compatible, individually rational, efficient and non-degenerated.
\end{prop}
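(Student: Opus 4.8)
The plan is to establish the four properties in turn, leaning on the characterization in Theorem~\ref{ic_ir}. Efficiency is immediate: the VCG allocation policy is by construction the efficient allocation $\pi^*$, so Definition~\ref{efficient_allocation} holds for every $\theta'$. The remaining work is to verify the two conditions of Theorem~\ref{ic_ir} for the VCG payments, and then to exhibit a witness for non-degeneracy.

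For the buyer condition (condition~1) I would argue three sub-claims. First, $\pi^*$ is value-monotonic: since $\pi^*$ selects the transactions with the top $K$ non-negative $W_j(\theta')=v_j'-C(T_j(\theta'))$, raising a winning buyer's bid $v_j'$ only increases $W_j$ and hence keeps $j$ among the top $K$, matching Definition~\ref{value-mono}. Second, a losing buyer pays zero: a buyer is a leaf who neither diffuses nor lies on another agent's path, so deleting a losing buyer $j$ changes neither the valid-agent set nor the optimal welfare, whence $W^*(\theta'_{-j})=W^*(\theta')=W_{-j}^*(\theta')$ and $x_j=0$. Third, a winning buyer pays her critical bid: this is the standard single-parameter identity, which I would prove by rewriting the VCG payment $W^*(\theta'_{-j})-W_{-j}^*(\theta')$ as the threshold value of $v_j'$ at which $W_j$ enters the top $K$, namely $C(T_j(\theta'))$ plus the $K$th highest competing welfare, and observing that this coincides with $v^*_j(\theta')$.

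The intermediary condition (condition~2) is where the core observation lives. Since $v_i'=0$ for every intermediary, $W_{-i}^*(\theta')=W^*(\theta')$ and the payment collapses to $x_i(\theta')=W^*(\theta'_{-i})-W^*(r_i',\theta'_{-i})$. I would then record the monotonicity fact that enlarging the set of valid agents can only weakly raise the maximum social welfare $W^*$, together with the fact that enlarging an intermediary's reported neighbor set can only enlarge $F(\theta')$. From this, $W^*(\theta'_{-i})\le W^*(r_i',\theta'_{-i})$ yields $x_i\le 0$; and for $r_i''\subseteq r_i'$ the term $W^*(\theta'_{-i})$ is independent of $i$'s report while $W^*(r_i'',\theta'_{-i})\le W^*(r_i',\theta'_{-i})$, which yields $x_i(r_i',\theta'_{-i})\le x_i(r_i'',\theta'_{-i})$, establishing exactly condition~2 and hence, via Theorem~\ref{ic_ir}, both IC and IR.

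Finally, for non-degeneracy I would exhibit, for each intermediary $i$, a profile witnessing Definition~\ref{non_trivial}. Take $\theta_{-i}$ with $i\in r_s$, a single unit $K=1$, and $i$ owning one buyer $k$ reachable only through $i$, with $v_k$ large enough that the $s$-$k$ transaction has strictly positive welfare while no other reachable buyer yields positive welfare. Reporting $r_i=\{k\}$ gives $u_i=W^*(\theta)-W^*(\theta_{-i})>0$, whereas reporting $r_i'=\emptyset\subset r_i$ renders $k$ unreachable, giving $u_i=0$, which is non-negative by the IR just proved; this verifies $u_i(r_i,\theta)>u_i(r_i',\theta')\ge 0$. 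I expect the main obstacle to be the winner-pays-critical-bid step, since matching the VCG payment expression to $v_j^*$ requires the careful threshold characterization of the top-$K$ rule; by contrast, the welfare-monotonicity observation, though simple to state, is the load-bearing fact underpinning both the intermediary payment analysis and the non-degeneracy construction.
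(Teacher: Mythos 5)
Your proposal is correct and follows essentially the same route as the paper's proof: it verifies the two conditions of Theorem~\ref{ic_ir}, simplifies the winner's VCG payment to the $K$th highest competing welfare plus the path cost $C(T_j(\theta'))$ to identify it with the critical bid, and uses the monotonicity of $W^*$ in the reported neighbor set (with $W^*(\theta'_{-i})$ independent of $i$'s report) for the intermediary condition. Your non-degeneracy argument is just a more explicit witness construction of the paper's observation that an intermediary in the efficient allocation strictly gains from diffusion, so the two proofs coincide in substance.
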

\begin{proof}
	The VCG mechanism adopts the efficient allocation policy, which is value-monotonic. We next prove that it is IC and IR. According to the payment policy, each intermediary $k$ pays $W^*(\theta'_{-k})-W_{-k}^*(\theta')$, which is identical to $W^*(\theta'_{-k})-W^*(\theta')$ as the intermediaries do not bid. The former term $W^*(\theta'_{-k})$ is independent of $r_k'$. The latter term $W^*(\theta')$ represents the maximum social welfare achieved in $F(\theta')$. Sharing the sale information to more neighbors will get more agents to participate in the auction, which will potentially increase $W^*(\theta')$. Therefore, $W^*(\theta')$ is non-decreasing with $r_k'$ and the VCG mechanism is IC and IR for all intermediaries according to Theorem \ref{app_ic_ir}. For each losing buyer $k$, she is not in the efficient allocation $\pi^*(\theta')$, and thus  $W^*(\theta'_{-k})=W_{-k}^*(\theta')=W^*(\theta')$, resulting in a zero payment. For each winner $k$, her payment is $W^*(\theta'_{-k})-W_{-k}^*(\theta')$. By subtracting the common part $W^*(\theta')-W_k^*(\theta')$ in both $W^*(\theta'_{-k})$ and $W_{-k}^*(\theta')$, we can simplify $k$'s payment as $W^{(K)}(\theta'_{-k})+\sum_{(i,i+1)\in t_{k}^*}w_{ii+1}$, where $T_{k}^*$ is the transaction associated with $k$ in $\pi^*(\theta')$, and $W^{(K)}(\theta'_{-k})$ denotes the social welfare obtained in the $K$th highest transaction under $\theta'_{-k}$, which is defined as $0$ if the $K$th highest social welfare is negative. Note that the allocation policy is efficient, and thus the value $W^{(K)}(\theta'_{-k})+\sum_{(i,i+1)\in T_{k}^*}w_{ii+1}$ is exactly the critical bid of $k$ for winning one item under $\theta'$. Hence, the VCG mechanism is IC and IR for all buyers. In addition, for all intermediaries $k$, her utility is strictly improved as long as her diffusion increases the social welfare $W^*(\theta')$, which is possible as long as she is included in the efficient allocation, i.e., the VCG mechanism is non-degenerated. 
\end{proof}
\section{Omitted Proofs for Section 5}
\begin{theorem}
	The CNA is efficient, individually rational, incentive-compatible and non-degenerated.
\end{theorem}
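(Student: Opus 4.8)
The plan is to verify the four properties in turn, leaning throughout on the characterization in Theorem~\ref{ic_ir}. Efficiency is immediate, since CNA allocates by $\pi^*$, which is efficient by Definition~\ref{efficient_allocation}. For the buyer half of Theorem~\ref{ic_ir} I would simply reuse the VCG argument of Proposition~\ref{vcg_outcome}: the efficient allocation is value-monotonic, each buyer's CNA payment is exactly her VCG payment, and that payment equals her critical bid while losing buyers pay zero. Thus condition~1 of Theorem~\ref{ic_ir} holds, and the whole problem reduces to establishing condition~2 for the intermediaries, namely $x_i(r_i',\theta'_{-i})\le x_i(r_i'',\theta'_{-i})\le 0$ for every $r_i''\subseteq r_i'$, together with non-degeneracy.

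The key preliminary step is to pin down the term $W^{(K)}(r_i'\setminus\tilde r_i(\theta'),\theta'_{-i})$. Because the intermediaries form a tree, I would first observe that $r_i'\setminus\tilde r_i(\theta')$ is precisely the set of direct buyer-neighbors of $i$ that lose in $\pi^*(\theta')$: every intermediary neighbor sits in $\tilde r_i(\theta')$ by Definition~\ref{cn}, and every winning buyer neighbor lies in $AG^*(\theta')$. Hence the transactions available under the report $(r_i'\setminus\tilde r_i(\theta'),\theta'_{-i})$ split into two groups: those not passing through $i$, which are insensitive to $i$'s report and whose $K$th largest welfare is exactly $W^{(K)}(\theta'_{-i})$; and the transactions $s\to\cdots\to i\to b$ to the losing direct buyers $b$ of $i$, whose welfare depends only on the fixed $s$-to-$i$ path cost. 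The sign claim $x_i\le 0$ then follows at once: the candidate transactions defining $W^{(K)}(r_i'\setminus\tilde r_i,\theta'_{-i})$ form a superset of those defining $W^{(K)}(\theta'_{-i})$, so taking the $K$th largest welfare gives $W^{(K)}(r_i'\setminus\tilde r_i,\theta'_{-i})\ge W^{(K)}(\theta'_{-i})$, i.e.\ $x_i\le 0$.

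The hard part is monotonicity, and this is where I expect the real work to lie, because the critical neighborhood itself moves with the report. For $r_i''\subseteq r_i'$ the subtracted term $W^{(K)}(\theta'_{-i})$ is identical, so it suffices to show $W^{(K)}(r_i''\setminus\tilde r_i(r_i''),\theta'_{-i})\le W^{(K)}(r_i'\setminus\tilde r_i(r_i'),\theta'_{-i})$. The lemma I would prove is a monotonicity of losing status: a direct buyer of $i$ that loses under the smaller report $r_i''$ still loses under the larger report $r_i'$. This holds because enlarging $i$'s declared neighborhood only enlarges $F(\theta')$ and hence the multiset $\{W_j(\theta')\}$, so the $K$th-largest threshold can only rise and a buyer already outside the top $K$ stays outside. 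Consequently the set of losing direct buyers under $r_i''$ is contained in that under $r_i'$; combined with the common set of non-$i$ transactions, the entire candidate multiset for the $r_i''$-scenario is a sub-multiset of that for the $r_i'$-scenario, and comparing $K$th-largest welfares yields the desired inequality. Condition~2 of Theorem~\ref{ic_ir} is then complete, so CNA is IC and IR.

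Finally, for non-degeneracy I would, for each intermediary $i$, exhibit a profile in which $i$ lies on a winning path and additionally owns one losing direct buyer whose transaction welfare exceeds $W^{(K)}(\theta'_{-i})$ (exactly as $B$ with $b_2$ does in Figure~\ref{example}). Reporting the full neighborhood then produces the strictly positive reward $W^{(K)}(r_i\setminus\tilde r_i(r_i),\theta)-W^{(K)}(\theta_{-i})>0$, whereas dropping that losing buyer leaves $r_i'\setminus\tilde r_i(r_i')=\emptyset$ and hence zero reward, giving $u_i(r_i,\theta)>u_i(r_i',\theta')=0$ with $r_i'\subset r_i$ as required by Definition~\ref{non_trivial}. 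The single delicate point, and the one meriting the most care, is the losing-status lemma in the monotonicity step: it is the only place where the report-dependence of the critical neighborhood collides with the re-ranking of the top-$K$ transactions, and it is the tree structure — fixing both the $s$-to-$i$ path and the ancestors' validity independently of $i$'s report — that keeps the candidate sets cleanly nested.
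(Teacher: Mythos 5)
Your proof is correct and takes essentially the same route as the paper's: efficiency by definition, buyers inheriting the VCG/critical-bid argument, IR from the superset relation $F(\theta'_{-i})\subseteq F(r_i'\setminus\tilde r_i(\theta'),\theta'_{-i})$, and IC via your losing-status monotonicity lemma, which—given your identification of $r_i'\setminus\tilde r_i(\theta')$ with $i$'s losing direct buyers—is exactly the paper's key containment $\tilde r_k\subseteq (r_k\setminus r_k')\cup\tilde r_k'$. Your threshold-rising argument merely spells that containment out more explicitly (and in the process states the resulting $W^{(K)}$ inequality with the correct orientation, which the paper's appendix writes with the direction flipped as a typo), and your non-degeneracy construction is the same strict-improvement scenario the paper gestures at with intermediary $B$ and buyer $b_2$.
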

\begin{proof}
	The CNA is efficient according to the definition. We next show it is IC and IR, which is sufficient to check the IC and IR properties for all intermediaries as the buyers pay the same as they are in the VCG mechanism. For each intermediary $k$, her payment is $W^{(K)}(\theta'_{-k})  - W^{(K)}(r_k'\setminus \tilde{r}_k(\theta'), \theta'_{-k})$ and her utility is $W^{(K)}(r_k'\setminus \tilde{r}_k(\theta'), \theta'_{-k})-W^{(K)}(\theta'_{-k})$. According to the definition of critical neighborhood, we have that $F(\theta'_{-k})\subset F(r_k'\setminus \tilde{r}_k(\theta'), \theta'_{-k})$, and therefore $k$'s utility is non-negative. Next, we prove that diffusing the sale information to all neighbors maximizes each intermediary's utility. Let $\tilde{r}_k$, $\tilde{r}_k'$ be the critical neighborhood of $k$ under strategies $r_k$ and $r_k'\subset r_k$, respectively. To prove IC, it is enough to show that $W^{(K)}(r_k\setminus \tilde{r}_k, \theta'_{-k})\le W^{(K)}(r_k'\setminus \tilde{r}_k', \theta'_{-k})$ for all $r_k'\subset r_k$, which is equivalent to show that $\tilde{r}_k\subseteq \{r_k\setminus r_k'\}\cup \tilde{r}_k'$ holds for all $r_k'\subset r_k$. 
	
	For every element $e\in \tilde{r}_k$, if $e$ locates in $AG^*(\theta')$, then according to the definition of critical neighborhood and the efficient allocation policy, we know that $e$ is either in $r_k\setminus r_k'$ or $\tilde{r}_k'$; otherwise, $e$ is an intermediary according to the definition of critical neighborhood, which also leads to the conclusion that $e$ is either in $r_k\setminus r_k'$ or $\tilde{r}_k'$. The above reasoning indicates that no matter to which neighbor set $k$ diffuses the sale information, we always have that $\tilde{r}_k\subseteq \{r_k\setminus r_k'\}\cup \tilde{r}_k'$. Therefore, in the CNA the intermediaries will diffuse the sale information to all their neighbors to maximize utilities, i.e., the CNA is IC for all intermediaries. Moreover, for each intermediary $k$, she will get a strict utility improvement whenever the new created winning transactions after removing her critical neighborhood pass $k$. That is, CNA is non-degenerated.
\end{proof}
\begin{prop}
	$R(\text{CNA},\theta')\ge R(\text{VCG},\theta')$ for all $\theta'$.
\end{prop}
\begin{proof}
	It is enough to show that $x_i^{cna}(\theta') \ge x_i^{vcg}(\theta')$ for all intermediary $i$, which is true as
	\setcounter{equation}{0}
	\begin{align}
		x_i^{cna}(\theta')&=W^{(K)}(\theta'_{-i})  - W^{(K)}(r'_i\setminus \tilde{r}_i(\theta'), \theta'_{-i})\\
		&\ge W^{(K)}(\theta'_{-i})  - W^{(K)}(\theta')\\
		&\ge W^*(\theta'_{-i})-W^*(\theta')\\
		&=W^*(\theta'_{-i})-W_{-i}^*(\theta')=x_i^{vcg}(\theta'),
	\end{align} 
	where $(1)$ is from the fact that $F(r'_i\setminus \tilde{r}_i(\theta'), \theta'_{-i})\subseteq F(\theta')$, $(3)$ is due to the efficient allocation policy and $(4)$ is from the observation that $i$ can be viewed as a losing buyer.
\end{proof}
\begin{lemma}\label{path_payment}
	For all $T^*_j$ in $AG^*(\theta')$, $X(T^*_j)-C(T^*_j)\ge W^{(K)}(\theta'_{-1})$, where $1$ represents for the first agent in $T^*_j$.
\end{lemma}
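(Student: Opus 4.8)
The plan is to expand $X(T^*_j)-C(T^*_j)$ using the CNA payment rules, isolate the contribution of the winning buyer $j$, and then show that the intermediary payments along the path telescope down to $W^{(K)}(\theta'_{-1})$. Write the path as $T^*_j=\{1=a_1,a_2,\dots,a_m,j\}$, where $a_1,\dots,a_m$ are the intermediaries and $j=a_{m+1}$ is the leaf. First I would invoke the simplification of the VCG winner payment established in the proof of Proposition~\ref{vcg_outcome}, namely $x_j(\theta')=W^{(K)}(\theta'_{-j})+C(T^*_j)$ (the buyer pays the same in CNA). Substituting this into $X(T^*_j)=x_j(\theta')+\sum_{\ell=1}^m x_{a_\ell}(\theta')$ makes the path cost cancel, leaving $X(T^*_j)-C(T^*_j)=W^{(K)}(\theta'_{-j})+\sum_{\ell=1}^m x_{a_\ell}(\theta')$. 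So it remains to lower-bound the sum of the intermediary payments.

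Each intermediary payment is $x_{a_\ell}(\theta')=W^{(K)}(\theta'_{-a_\ell})-W^{(K)}\big(r'_{a_\ell}\setminus\tilde r_{a_\ell}(\theta'),\theta'_{-a_\ell}\big)$. The key step is the telescoping inequality
\[
W^{(K)}\big(r'_{a_\ell}\setminus\tilde r_{a_\ell}(\theta'),\theta'_{-a_\ell}\big)\le W^{(K)}\big(\theta'_{-a_{\ell+1}}\big),\qquad \ell=1,\dots,m,
\]
with the convention $a_{m+1}=j$. To prove it I would argue at the level of valid agents and use the tree structure. Since the intermediaries form a tree, $a_{\ell+1}$ is the unique child of $a_\ell$ on the path and lies in $\tilde r_{a_\ell}(\theta')$ (it is either an intermediary neighbor or a node of $AG^*(\theta')$). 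Hence withholding $\tilde r_{a_\ell}(\theta')$ disconnects the whole subtree rooted at $a_{\ell+1}$, together with every other winning child-subtree of $a_\ell$; the only neighbors $a_\ell$ still reports are its losing buyers, who never enter the top $K$. Removing $a_{\ell+1}$ entirely disconnects only the subtree rooted at $a_{\ell+1}$. A short case analysis on where an agent sits relative to $a_\ell$ then yields the inclusion $F\big(r'_{a_\ell}\setminus\tilde r_{a_\ell}(\theta'),\theta'_{-a_\ell}\big)\subseteq F(\theta'_{-a_{\ell+1}})$. Because each still-valid buyer $b$'s transaction welfare $W_b$ is fixed by its (unchanged) path to $s$, a smaller valid set contributes a sub-collection of transaction welfares, so its $K$th highest value can only be smaller; this is exactly the displayed inequality.

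Combining, $x_{a_\ell}(\theta')\ge W^{(K)}(\theta'_{-a_\ell})-W^{(K)}(\theta'_{-a_{\ell+1}})$, and summing over $\ell=1,\dots,m$ telescopes to $W^{(K)}(\theta'_{-a_1})-W^{(K)}(\theta'_{-j})$. Adding back the buyer term gives
\[
X(T^*_j)-C(T^*_j)\ge W^{(K)}(\theta'_{-j})+W^{(K)}(\theta'_{-a_1})-W^{(K)}(\theta'_{-j})=W^{(K)}(\theta'_{-1}),
\]
as claimed; the degenerate case $m=0$ (a buyer adjacent to $s$) holds with equality. I expect the main obstacle to be the telescoping inequality: one must translate the set-difference report $r'_{a_\ell}\setminus\tilde r_{a_\ell}(\theta')$ into a precise statement about which subtrees remain valid, and verify the monotonicity of $W^{(K)}$ under shrinking valid sets. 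Everything else is bookkeeping.
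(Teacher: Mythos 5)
Your proof is correct and takes essentially the same route as the paper's: both expand $X(T^*_j)$, simplify the winning buyer's payment to $W^{(K)}(\theta'_{-j})+C(T^*_j)$, and obtain the telescoping from the tree-structure inclusion $F\big(r'_{a_\ell}\setminus\tilde r_{a_\ell}(\theta'),\theta'_{-a_\ell}\big)\subseteq F(\theta'_{-a_{\ell+1}})$ combined with monotonicity of $W^{(K)}$ in the valid-agent set. The only difference is cosmetic: the paper performs the telescoping by re-indexing the sum of $W^{(K)}$ terms, whereas you bound each intermediary payment $x_{a_\ell}(\theta')\ge W^{(K)}(\theta'_{-a_\ell})-W^{(K)}(\theta'_{-a_{\ell+1}})$ and sum.
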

\begin{proof}
	For technical convenience, we use $\{1,\cdots,j-1,j\}$ to denote $T^*_j$. For any $T^*_j$, we have that
	\begin{small}
		\begin{align*}
			X(T^*_j)&=\sum_{k\in T^*_j} x_k(\theta')= \sum_{k\in T^*_j\setminus\{j\}} x_k(\theta')+x_j(\theta')\\
			&=\sum_{k\in T^*_j\setminus\{j\}} (W^{(K)}(\theta'_{-k})  - W^{(K)}(r'_k\setminus \tilde{r}_k(\theta'), \theta'_{-k}))+\\&W^*(\theta'_{-j})-W_{-j}^*(\theta').
		\end{align*}
	\end{small}
	
	By subtracting $W^*(\theta')-W_j^*(\theta')$ in both $W^*(\theta'_{-j})$ and $W_{-j}^*(\theta')$, $W^*(\theta'_{-j})-W_{-j}^*(\theta')$ is simplified as $W^{(K)}(\theta'_{-j})+C(T^*_j)$. Therefore, $X(T^*_j)$ can be further denoted by
	\begin{small}
		\begin{equation}
			\begin{aligned}
				X(T^*_j)=&W^{(K)}(\theta'_{-1})+C(T^*_j)+\sum_{k\in T^*_j\setminus\{j\}} (W^{(K)}(\theta'_{-{k+1}})-\\&W^{(K)}(r'_k\setminus \tilde{r}_k(\theta'), \theta'_{-k})).
			\end{aligned}
	\end{equation}\end{small}
	
	Since the market has a tree structure, we know that $F(\theta'_{-k})\subset F(r'_k\setminus \tilde{r}_k(\theta'), \theta'_{-k})\subseteq F(\theta'_{-{k+1}})$, which means that for all $k\in T^*_j\setminus\{j\}$,  \begin{small}
		\begin{equation}
			W^{(K)}(\theta'_{-{k+1}})-W^{(K)}(r'_k\setminus \tilde{r}_k(\theta'), \theta'_{-k})\ge 0.
	\end{equation}\end{small}
	
	Combined with $(5)$ and $(6)$, we have that $X(T^*_j)\ge W^{(K)}(\theta'_{-1})+C(T^*_j)\Rightarrow X(T^*_j)-C(T^*_j)\ge W^{(K)}(\theta'_{-1})$.
\end{proof}
\begin{theorem}
	$R(\text{CNA},\theta')\ge R(\text{VCG-WI}, \theta')\ge 0$ for all $\theta'$, where $\text{VCG-WI}$ refers to the VCG Without Intermediaries.
\end{theorem}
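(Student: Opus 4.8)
The plan is to build directly on Lemma~\ref{path_payment} together with the two displayed identities that precede the theorem, and then to reduce the whole claim to a pointwise comparison between each per-path bound and the critical value that governs the revenue of $\text{VCG-WI}$. First I would combine $\sum_{k\in T^*_j}\frac{x_k(\theta')}{n_k}\ge X(T^*_j)$ with the aggregation identity $\sum_{j}(\sum_{k\in T^*_j}\frac{x_k(\theta')}{n_k}-C(T^*_j))=\sum_{k\in AG^*(\theta')}x_k(\theta')-C(\theta')$ to write $R(\text{CNA},\theta')=\sum_{k\in AG^*(\theta')}x_k(\theta')-C(\theta')\ge \sum_{j\in\text{leaves}}\big(X(T^*_j)-C(T^*_j)\big)$, where the sum ranges over the winning buyers (leaf nodes of $AG^*(\theta')$). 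Applying Lemma~\ref{path_payment} term by term then yields $R(\text{CNA},\theta')\ge \sum_{j\in\text{leaves}}W^{(K)}(\theta'_{-1_j})$, where $1_j$ denotes the first agent on the path $T^*_j$, which is necessarily a direct neighbor of $s$.

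The core of the argument is the claim that every such term dominates the $\text{VCG-WI}$ critical value. Let $g$ denote the $(K+1)$th highest value of $\{v_b-w_{sb}\}$ over the seller's direct buyer-neighbors $B_s=r_s\cap B$ (set to $0$ when negative); this is exactly the per-winner net payment of $\text{VCG-WI}$, so a standard computation gives $R(\text{VCG-WI},\theta')=K'\cdot g$ with $K'$ the number of $\text{VCG-WI}$ winners. I would show $W^{(K)}(\theta'_{-1_j})\ge g$ for each leaf. Because the intermediaries form a tree rooted at $s$, removing $1_j$ deletes exactly the subtree hanging from that single child of $s$, so every other direct buyer survives; hence at most one element of $B_s$ is lost, and the $K$th highest surviving single-transaction welfare is still at least $g$. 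Since $W^{(K)}(\theta'_{-1_j})$ is taken over all transactions (direct and via other intermediaries), it is at least this $K$th highest surviving welfare, giving $W^{(K)}(\theta'_{-1_j})\ge g$.

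Finally I would assemble the pieces. When $g>0$ there are at least $K+1$ direct buyers with positive welfare, so both $\text{VCG-WI}$ and the efficient allocation sell all $K$ items; thus there are exactly $K$ leaves and $K'=K$, whence $\sum_{j}W^{(K)}(\theta'_{-1_j})\ge K\,g=R(\text{VCG-WI},\theta')$. When $g=0$ we have $R(\text{VCG-WI},\theta')=0$, and the bound holds trivially because each $W^{(K)}(\theta'_{-1_j})\ge 0$ by definition; the same definitional non-negativity delivers the final inequality $R(\text{VCG-WI},\theta')\ge 0$. The main obstacle is the pointwise claim $W^{(K)}(\theta'_{-1_j})\ge g$: this is where the tree assumption is genuinely used (deleting a child of $s$ removes at most one direct buyer and leaves the competing neighbors intact), and one must track carefully how many items are actually allocated relative to $K$, together with the convention that truncates a negative $K$th-highest welfare to zero.
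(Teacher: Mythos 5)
Your proof follows essentially the same route as the paper's: the same decomposition via the aggregation identity and the bound $\sum_{k\in T^*_j}x_k(\theta')/n_k\ge X(T^*_j)$, the same application of the path-payment lemma to get $R(\text{CNA},\theta')\ge\sum_{j}W^{(K)}(\theta'_{-1_j})$, and the same pointwise comparison of $W^{(K)}(\theta'_{-1_j})$ with the $(K+1)$th highest direct-neighbor value (the paper's $r_s\subseteq F(\theta'_{-1})\cup\{1\}$ step), exploiting the tree structure so that removing the child $1_j$ of $s$ leaves all other direct neighbors valid. If anything, your explicit case split on $g>0$ versus $g=0$ is slightly more careful than the paper, which implicitly assumes $AG^*(\theta')$ has exactly $K$ leaves when it writes $\sum_{j\in\text{leaf nodes}}v^{(K+1)}=K\cdot v^{(K+1)}$.
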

\begin{proof}
	Given any reported type profile $\theta'$, the seller's revenue achieved in CNA is $R(\text{CNA},\theta')=\sum_{k\in N}x_k(\theta')-C(\theta')$. Since the intermediaries who are not in $AG^*(\theta')$ cannot change $AG^*(\theta')$ via strategic reports, their payments are all zero according to the payment policy of CNA. Hence, $R(\text{CNA},\theta')$ can be expressed as $\sum_{k\in AG^*(\theta')}x_k(\theta')-C(\theta')$. Based on Lemma \ref{path_payment}, we have that 
	\begin{small}
		\begin{align*}
			R(\text{CNA},\theta')&=\sum_{k\in AG^*(\theta')}x_k(\theta')-C(\theta')\\
			&=\sum_{j\in \text{leaf nodes}}(\sum_{k\in T^*_j}\frac{x_k(\theta')}{n_k}-C(T^*_j))\\
			&\ge \sum_{j\in \text{leaf nodes}}(\sum_{k\in T^*_j}x_k(\theta')-C(T^*_j))\\
			&=\sum_{j\in \text{leaf nodes}}(X(T^*_j)-C(T^*_j))\\
			&\ge \sum_{j\in \text{leaf nodes}}W^{(K)}(\theta'_{-1}).
		\end{align*}
	\end{small}
	
	For the multi-unit unit-demand setting, the traditional VCG mechanism will allocate the items to the buyers with top $K$ highest bids, and charges each of them the $K+1$th highest bid. Since the traditional VCG mechanism does not pay the intermediaries, the intermediaries have no incentive to invite others to the sale. That is, when applying the traditional VCG mechanism in our setting, only the buyers reached by the seller, namely buyers in $r_s$, are able to participate in the auction. Therefore, $R(\text{VCG-WI},\theta')$ is identical to $K\cdot v^{(K+1)}$, where $v^{(K+1)}$ denotes the $K+1$th highest bid in $r_s$. 
	
	Recall that $r_s\subseteq F(\theta'_{-1})\cup \{1\}$ for every $T^*_j$, we have that $W^{(K)}(\theta'_{-1})$, the $K$th highest social welfare without agent $1$'s participation, is no less than the $K+1$th highest social welfare in $F(\theta'_{-1})\cup \{1\}$, which is at least $v^{(K+1)}$. Therefore, $R(\text{CNA},\theta')\ge \sum_{j\in \text{leaf nodes}}W^{(K)}(\theta'_{-1})\ge \sum_{j\in \text{leaf nodes}}v^{(K+1)}=K\cdot v^{(K+1)}=R(\text{VCG-WI},\theta')$, which completes the proof.
\end{proof}
\end{document}